\Crefname{figure}{Figure}{Figures}
\setlist{itemsep=1pt,topsep=1pt,parsep=0pt}
\newcommand{\msg}[1]{{\tt (#1)}}
\newcommand{\func}[1]{{\tt #1}}
\theoremstyle{plain}
\newtheorem{theorem}{Theorem}[section]
\newtheorem{lemma}{Lemma}[section]
\theoremstyle{definition}
\newtheorem{definition}{Definition}[section]
\newtheorem{example}{Example}[section]
\DeclareMathOperator{\poly}{poly}
\DeclareMathOperator{\polylog}{polylog}
\DeclareMathOperator{\subpoly}{subpoly}
\DeclareMathOperator*{\Exp}{\mathbb{E}}
\newcommand{\True}{\mathsf{True}}
\newcommand{\False}{\mathsf{False}}
\newcommand{\commitsubs}{c}
\newcommand{\commitrand}{\rho_\commitsubs}
\newcommand{\commitrandlen}{n_\commitsubs}
\newcommand{\commitmsg}{D_\commitsubs}
\newcommand{\commitmsglen}{L_\commitsubs}
\newcommand{\commitstrat}{S_\commitsubs}
\newcommand{\protcommitstrat}{\hat{S}_\commitsubs}
\newcommand{\commitverify}{\psi_\commitsubs}
\newcommand{\commitcorrectprop}{\phi_\commitsubs}
\newcommand{\runsubs}{r}
\newcommand{\runrand}{\rho_\runsubs}
\newcommand{\runrandlen}{n_\runsubs}
\newcommand{\runmsg}{D_\runsubs}
\newcommand{\runmsglen}{L_\runsubs}
\newcommand{\runstrat}{S_\runsubs}
\newcommand{\protrunstrat}{\hat{S}_\runsubs}
\newcommand{\runverify}{\psi_\runsubs}
\newcommand{\runcorrectprop}{\phi_\runsubs}
\newcommand{\simrand}{\rho_\Sim}
\newcommand{\simrandlen}{n_\Sim}
\newcommand{\ctprelation}{\eta}
\newcommand{\pprotocol}[5]{
{\begin{figure}[#4]
\begin{center}
\fbox{
    \footnotesize
   \hbox{\quad
   \begin{minipage}{5.25in}
  \begin{center}
    {\bf #1}
    \end{center}
    #5
    \end{minipage}
    \quad}
    }
\caption{\label{#3} #2}
\end{center}
\end{figure}
} }
\newcommand{\F}{{\cal F}}
\newcommand{\fctp}{\F_{\mbox{\sc ctp}}}
\newcommand{\fcom}{\F_{\mbox{\sc com}}}
\newcommand{\fnicom}{\F_{\mbox{\sc nicom}}}
\newcommand{\fzk}{\F_{\mbox{\sc zk}}}
\newcommand{\fnizk}{\F_{\mbox{\sc nizk}}}
\newcommand{\Sim}{{\mathcal S}}
\newcommand{\A}{{\mathcal A}}
\newcommand{\D}{{\mathcal D}}
\newcommand{\cryptofootnote}[1]{\begingroup\renewcommand{\thefootnote}{\faWrench\arabic{footnote}}\footnote{#1}\endgroup}
\let\cryptocref\cref
\title{Mechanism Design Without Disclosure:\\Committing to and Running Hidden Mechanisms\thanks{A one-page abstract of an earlier version of this paper, entitled ``Zero-Knowledge Mechanisms,'' appeared in the \textit{Proceedings of the 26th ACM Conference on Economics and Computation}. The authors thank Nina Bobkova, Ben Brooks, Eric Budish, Modibo Camara, Laura Doval,
Piotr Dworczak, George Georgiadis, Shafi Goldwasser, Sergiu Hart, Andreas Haupt, {\'A}ngel Hernando-Veciana, Zo{\"e} Hitzig, Emir Kamenica, Deniz Kattwinkel, Jacob Leshno, Shengwu Li, Eric Maskin, Steven Matthews, Stephen Morris, Roger Myerson, Noam Nisan, Mallesh Pai, David Parkes, Doron Ravid, Phil Reny, Assaf Romm, Joe Root, J{\'o}zsef S{\'a}kovics, Ran Shorrer, Ron Siegel, Tomasz Strzalecki, Alex Teytelboym, Clayton Thomas, Rakesh Vohra, and participants at the 2022 Cowles Conference on Economic Theory, 7th Lindau Meeting on Economic Sciences, INFORMS 2022, ASSA-NAWM 2023, the WZB Workshop on Strategy-Proofness and Beyond, INFORMS Workshop on Market Design 2024, and at seminars at The Hebrew University, Harvard, MIT, Penn State, VSET, NYU, UChicago, UPenn, A16z Crypto, UCL, LSE, and Oxford, for helpful comments and discussions. The authors gratefully acknowledge research support by the following sources. Canetti: Supported by DARPA under Agreement No.\ HR00112020021; any opinions, findings and conclusions or recommendations expressed in this material are those of the authors and do not necessarily reflect the views of the United States Government or DARPA\@. Gonczarowski: National Science Foundation (NSF-BSF grant No.\ 2343922) and Harvard FAS Dean's Competitive Fund for Promising Scholarship. Parts of the work of Gonczarowski were carried out while at Tel Aviv University and at Microsoft Research.
}}
\author{
Ran Canetti\thanks{Department of Computer Science, Boston University | \emph{E-mail}: \href{mailto:canetti@bu.edu}{canetti@bu.edu}.}
\and
Amos Fiat\thanks{Department of Computer Science, Tel Aviv University | \emph{E-mail}: \href{mailto:fiat@tau.ac.il}{fiat@tau.ac.il}.}
\and
Yannai A. Gonczarowski\thanks{Department of Economics and Department of Computer Science, Harvard University | \emph{E-mail}: \href{mailto:yannai@gonch.name}{yannai@gonch.name}.}
}
\def\documentdate{June 3, 2026}
\date{\documentdate}
\begin{document}

\maketitle

\begin{abstract}
A central tenet in mechanism design is the ability to irrevocably commit to a mechanism. Commitment is achieved by public declaration, letting players verify incentive properties in advance and the outcome in retrospect. However, public declaration can reveal superfluous information that is private to the mechanism designer, such as her target function or costs. We propose a new approach to commitment, and show how to commit to, and run, \emph{any given mechanism} without disclosing it, while enabling the verification of incentive properties and the outcome---all without any mediators. Our framework leverages zero-knowledge proofs---a cornerstone of modern cryptographic theory.
\end{abstract}

\pagenumbering{roman}
\thispagestyle{empty}

\clearpage

\setcounter{tocdepth}{2}
\begin{spacing}{1}
\tableofcontents
\end{spacing}

\thispagestyle{empty}
\clearpage
\pagenumbering{arabic}

\begin{bibunit}

{\raggedleft\small
\emph{``Don't tell customers more than they need to know.''}

\footnotesize ---Ferengi Rule of Acquisition \#39, Star Trek

\vspace{-1em}
}

\section{Introduction}

At the heart of the mechanism-design paradigm lies the designer's ability to commit to a mechanism.
Public declaration of the mechanism not only facilitates the commitment, but also serves to allow the players to inspect the mechanism and verify its strategic properties or other properties of interest. For example, a player's recognition of a dominant strategy as such enables the mechanism designer to predict play when designing the mechanism.

However, the mechanism being an ``open book'' reveals far more than incentive properties. By inspecting the mechanism, players might be able to infer information that the mechanism designer may have preferred to conceal. For example, consider a setting with one seller, one good for sale, and one buyer. Imagine that the seller posts the optimal monopoly price for the good---a price calculated based on both the seller's prior over the buyer's valuation and the seller's cost. Then, a buyer who knows the seller's prior could reverse the calculation and infer the seller's cost (under generic conditions). And, this cost might be a trade secret.\footnote{Why the seller might wish to keep her cost, or any other information, secret is unmodeled for now (and breaks away from the standard monopolist problem).} In more complex auction or selling settings, one could think not only of costs as being trade secrets, but also of other information that influences the design of the mechanism, such as inventory size, production capabilities, and even qualitative features such as whether the mechanism is constructed to maximize welfare or revenue.

One way to commit to a mechanism without disclosing any information except for its incentive properties is via a trusted mediator \citep[in the spirit of][]{Myerson1983}. In the monopoly-price example, the seller could discreetly entrust such a mediator with the 
price. The buyer would then declare her value (maximum willingness to pay), and the mediator would reveal the price if it is no more than the buyer's value, and otherwise only say that the price is higher than the value. With more general mechanisms, the mechanism designer could discreetly entrust the mediator with the mechanism description and instruct the mediator to confirm various properties of the mechanism (e.g., incentive properties such as individual rationality and incentive compatibility) to the players. After all bids are declared (or more generally, all actions are played), the mediator could confirm that the outcome declared by the designer indeed corresponds to the mechanism entrusted to her at the outset. The mediator would then never speak of this mechanism again.\footnote{Indeed, some trade secrets might be harmful if disclosed even years later.}

The availability of a trustworthy mediator, however, is a strong, often unrealistic, assumption. One compelling example comes from an episode involving the famous German poet Johann Wolfgang von Goethe, recounted by \cite{MoldovanuT1998}. 
Seeking to elicit his publisher's valuation of the right to publish one of Goethe's poems, Goethe made the publisher a BDM-style offer \citep*{BeckerDM1964}. Goethe trusted his personal lawyer to serve as a mediator and keep the reserve price in the offer secret. The lawyer, while trusted, turned out not to be trust\emph{worthy}: he leaked the price to the publisher, who could then bid precisely that price and no more. In their paper that recounts and analyzes this fascinating story, \cite{MoldovanuT1998} note:

\begin{quotation}\small
\noindent
``Commitments based on a secret reserve price are, in general, hard to achieve: if it is secret, what is to stop the seller from reneging? Part of Goethe's cleverness consisted in devising the scheme such that commitment is achieved by using a third, neutral party.''
\end{quotation}

In this paper, we investigate to what extent secrecy guarantees such as those provided when running a mechanism via a trustworthy mediator can be provided in the absence of a mediator, or in other words, to what extent public disclosure is an inseparable part of commitment to a mechanism when a mediator is not available.

The traditional \emph{protocol} that is used to commit (by public declaration) to an individually rational (IR) and incentive compatible (IC) direct-revelation mechanism\footnote{For simplicity, we focus in our exposition on IR and IC as the properties of interest of the mechanism that are to be verified before it is run. Our results apply to arbitrary properties of mechanisms.} and then run it can be formalized as consisting of three messages:
\begin{enumerate}
    \item \emph{Commitment:} The mechanism designer (e.g., seller) sends a message to the player (e.g., buyer),\footnote{For simplicity, in the introduction we focus on the case of one player. Our results apply to any number of players.} describing the mechanism. By virtue of this message, (a)~commitment is established (if the mechanism designer fails to follow the mechanism later then this will be publicly known and hence punishable), and (b) the player can verify that the mechanism is indeed IR and IC.
    \item \emph{Direct revelation:} The player decides whether or not to participate, and if so, sends a message to the mechanism designer, revealing the player's type.
    \item \emph{Running the mechanism:} The mechanism designer sends a message to the player, specifying the outcome. The player can verify that the outcome is consistent with the mechanism that was fixed in the first message.
\end{enumerate}

In contrast, with mechanism secrecy in mind, one may ideally be interested in a protocol along the lines of the following, too-good-to-be-true protocol:
\begin{enumerate}
    \item \emph{Commitment:} The mechanism designer sends a message (say, a sequence of numbers) to the player. By virtue of this message, (a) commitment is established (in the sense that a mechanism was fixed, and if the mechanism designer fails to follow the mechanism later then this will somehow be publicly known and hence punishable), and (b) the player can verify that the mechanism is indeed IR and IC. \textbf{However, this message must \emph{not} reveal anything about the mechanism except for it being IR and~IC.}
    \item \emph{Direct revelation:} The player decides whether or not to participate, and if so, sends a message to the mechanism designer, revealing the player's type.
    \item \emph{Running the mechanism:} The mechanism designer sends a message to the player, containing the outcome and additional information (say, a further sequence of numbers). By virtue of this additional information, the player can verify that the outcome is consistent with the mechanism that was fixed when the first message was sent. \textbf{However, it must be that nothing observed by the player can reveal anything about the mechanism that would not have been revealed had the mechanism been run via a trustworthy mediator, i.e., anything beyond the mechanism being IR and IC, its outcome for the specific player's type, and whatever can be inferred from the combination of these.}
\end{enumerate}
We emphasize that the desideratum is not for a limited commitment to parts of the mechanism while the designer can still change the rest, but for a \emph{full} commitment, fixing the entire mechanism, yet revealing only certain facets of it (IR, IC, and the outcome for the player's reported type).

Due to the inherent tension between commitment and nondisclosure (absent a trustworthy mediator), one might expect a tradeoff between the two.
Nonetheless, we show that \emph{practically}, no tradeoffs are needed here. That is, we show that a close variant of the above too-good-to-be-true protocol, which is only slightly weaker in a technical sense, can in fact be implemented. We construct a protocol that can be used to \textbf{commit to, and run, any mechanism, without disclosing it but while certifying its properties} (such as IR and IC, and/or any other properties that the mechanism designer wishes to expose, such as group strategyproofness, stability, etc.). Our protocol can be used to run mechanisms with \textbf{any number of players}, and players \textbf{retain the same strategy space as in the underlying direct-revelation mechanism}. The ``only'' difference between our protocol and the traditional protocol is the way the mechanism is described and outcomes are certified, which provides practically the same secrecy guarantee to the seller as running the mechanism via a trustworthy mediator does---which can be thought of as ``first-best privacy''---yet without the need for any mediators or any preexisting trust between any parties.

A main question is, of course, how to model our desiderata for such a protocol. We formulate two main, contrasting desiderata: \emph{hiding}---requiring that the player learns no more than in the ``first-best privacy'' mediated interaction---and \emph{committing}---requiring that if there is any violation by the mechanism designer (i.e., if the mechanism is not IR or IC, or if the mechanism designer does not follow the mechanism when calculating the outcome), then this becomes known to the player. Intuitively, the latter desideratum pushes toward revelation of the mechanism, contrasting with the former desideratum.

We start with \emph{hiding}: Quite a lot can potentially be learned about a mechanism from its outcome alone,\footnote{\cite{EilatEM2023} study the complementary problem of reducing what can be learned from the outcome.} let alone from both the outcome and the type reports that yield it.\footnote{Analysis showing that in a weighted Groves mechanism with no pivot rule, all weights can be completely recovered given a single outcome and the type reports that yield it appeared in an earlier version of this paper. We omit it here for brevity and to keep the exposition focused.} How might one model the idea that nothing more is learned about the mechanism beyond what is learned from it being IR and IC, its outcome, and the type reports that yield it? We formulate the \emph{hiding} desideratum by comparing two runs of our protocol that use distinct IR and IC mechanisms. We require that when the mechanisms are assigned to the two runs in random order, one's prior on which run corresponds to which mechanism is unchanged given the commitment-step messages sent in these two runs. Furthermore, we require that if the type reports and the outcome of the mechanism are the same in both runs, then given all messages sent in these two runs, one's prior on which run corresponds to which mechanism remains unchanged.\footnote{For randomized mechanisms, we require this prior to remain unchanged if the type reports and the \emph{realized} outcome of the mechanism are the same in both runs. That is, the distribution from which the outcome is drawn is hidden, and only the realized outcome is revealed.}

We move on to \emph{committing}: We model this desideratum by requiring that there exists a verification procedure on all messages sent while running the protocol---i.e., a procedure whose input is all messages sent and whose output is either ``pass'' or ``fail''---such that if the verification passes, then it must be that an IR and IC mechanism was fixed before the commitment-step message was sent, and the outcome declared in the running step is the outcome of this mechanism for the reported types.\footnote{For randomized mechanisms, we require that if the verification passes, then the declared outcome is furthermore faithfully drawn (i.e., drawn faithfully from the distribution specified by the mechanism).} Formally, we require that there exists a verification procedure, and additionally, for every mechanism-designer strategy there exists a mapping, from all the information known to the mechanism designer before sending the commitment-step message, to IR and IC mechanisms, such that whenever the verification passes, the declared outcome must be the outcome (or, for randomized mechanisms, a faithfully drawn outcome) of the mechanism given by the mapping. Of course, we also require that our protocol be \emph{implementing}: for every IR and IC mechanism, there is a mechanism-designer strategy that runs this mechanism and makes the verification pass.

These desiderata, in the ideal form just described, cannot be met simultaneously. Nevertheless, we construct a protocol that \emph{practically} satisfies them for all mechanisms. To model what we mean by ``practically,'' we draw inspiration from computer science, and specifically, from cryptography. You, our reader, are most likely reading this introduction within a web browser. This web browser, near the URL (website address) of the paper, might display a lock icon, which you trust to mean that any communication between your computer and the website is readable only by your computer and by the website. While this is what the icon \emph{practically} means, it is in fact not what it precisely means. What it precisely means is that (1) assuming that certain, well-studied computational problems (such as factoring large numbers) are computationally hard, and (2) assuming that any eavesdropping third party does not have exceedingly high computational power (say, more than all computers in the world combined), and (3) barring an exceedingly unlikely lucky guess by an eavesdropping third party (intuitively, correctly guessing a strong encryption password), any communication between your computer and the website is readable only by your computer and by the website (barring any bug in the browser or website software that implements the cryptography, of course).\footnote{\label{epsilon-edge}The guarantee with respect to point (3) is technically somewhat more nuanced, guaranteeing that the eavesdropper does not get any nonnegligible ``edge'' (whether by guessing a password or otherwise).}

Similarly, we henceforth use versions of our ``hiding'' and ``committing'' desiderata that require that their above versions hold (1) under standard, widely believed assumptions that certain, well-studied computational problems are computationally hard, and (2) assuming that the agents (the player for the hiding desideratum, the mechanism designer for the committing desideratum) do not have exceedingly high computational power, and (3) with probability $1\!-\!\varepsilon$ over some event that neither the mechanism designer nor the player can influence.\footnote{For hiding, our guarantee with respect to point (3) is again---as in the web-browser privacy guarantee (see \cref{epsilon-edge})---technically somewhat more nuanced, guaranteeing that the player does not get an ``edge'' of more than~$\varepsilon$ in terms of her ability to discern between possible mechanisms.}
Naturally, what is considered ``exceedingly high computational power,'' and what $\varepsilon$ is, are parameters of our construction.\footnote{Within computer science, cryptographic constructions are usually not directly parametrized by an attacker's computational power and success probability $\varepsilon$ as ours are, but rather by a parameter called a ``security parameter,'' which in turn implies both of these values but is harder to interpret from an economic perspective (which is the reason we avoid this parameter).}

Specifically, for hiding, we require that given the commitment-step messages sent in two runs corresponding to two distinct IR and IC mechanisms (assigned to the runs in random order), without using exceedingly high computational power one cannot correctly guess which run corresponds to which mechanism with probability greater than $\frac{1}{2}+\varepsilon$. Similarly, if the type reports and the (realized) outcome of the mechanism are the same in both runs, then given all messages sent in these two runs, without using exceedingly high computational power one cannot correctly guess which run corresponds to which mechanism with probability greater than $\frac{1}{2}+\varepsilon$. For committing, we require that for every mechanism-designer strategy that does not use exceedingly high computational power, there exists a mapping, from all of the information known to the mechanism designer before sending the commitment-step message, to IR and IC mechanisms, such that if the announced outcome is not the outcome of the mechanism (or, for randomized mechanisms, if it is not faithfully drawn), then with probability at least $1-\varepsilon$ the verification does not pass.

Finally, an important additional desideratum for our protocol is that it be \linebreak \emph{computationally unobtrusive}, a desideratum that rules out protocols satisfying our other desiderata at the cost of excessive computational friction for ``well-behaved'' agents (i.e., agents who participate in the protocol without malice). Specifically, the computational power required by well-behaved agents---for the mechanism designer, this is the computational power of the strategy from our \emph{implementing} desideratum; for the player, this is the computational power of running the verification (recall that our protocol is strategically equivalent to traditional ones, so the only message sent by the player is the single message that contains her type)---is essentially the same as in the traditional protocol, and smaller by many orders of magnitude than the computational power against which the protocol provides a guarantee (e.g., computational power just shy of that which is required to factor large numbers). Moreover, the computational power required by well-behaved agents increases exceedingly slowly as we decrease $\varepsilon$. In a precise sense, well-behaved agents that have moderate computational power are provided with very strong guarantees against adversaries with vastly greater computing power.

We prove that our four desiderata (in their practical form) can be simultaneously met: There is a protocol that is \emph{hiding}, \emph{committing}, \emph{implementing}, and \emph{computationally unobtrusive}. We emphasize that at no point during or after the run of the protocol is the mechanism ever disclosed to anyone by the designer. Rather than a traditional commitment to an observed mechanism via verification of publicly observed signals (the mechanism announcement and the outcome announcement), our protocol enables a \emph{self-policing} commitment to a \emph{never-observed mechanism}, yet (similarly to a traditional commitment) still via verification of publicly observed signals (the messages). And yet, as noted, this is achieved without relying on any mediators or preexisting trust and without changing the strategy space of any player, practically maintaining strategic equivalence to the traditional protocol in which the mechanism is publicly announced and then run. \textbf{In a precise sense, we decompose the classic notion of \emph{commitment to a mechanism}, showing that disclosure of the mechanism is, in fact, not an essential part thereof.} Our result can therefore also be viewed as an implementation-theory result: Any mechanism is implementable without disclosure.

Our existence proof for the protocol is constructive. While our ``committing'' and ``hiding'' desiderata (both their ``ideal'' and ``practical'' variants) differ substantively from standard definitions in computer science, to construct our protocol we leverage tools from cryptography. More precisely, we turn to zero-knowledge proofs, a cornerstone of modern cryptographic theory. At a \emph{very} high level, glossing over many details, the commitment-step message in our protocol contains a \emph{cryptographic commitment} to the mechanism (not to be confused with the game-theoretic notion of commitment, even though in this paper the former implements the latter). One might intuitively think of this cryptographic commitment as an encrypted version of the mechanism.\footnote{At the technical level, the requirements for commitments and encryptions are quite different.} The commitment-step message also contains a \emph{zero-knowledge proof} that the commitment is to a mechanism that is IR and IC---and/or satisfies any other properties that the mechanism designer wishes to expose. This zero-knowledge proof is a mathematical object that, when examined, convinces the player beyond any reasonable doubt (i.e., this is where $\varepsilon$ comes in) that the mechanism designer has created the cryptographic commitment by hiding a mechanism that satisfies the claimed properties, yet reveals no other information about the mechanism.\footnote{A main challenge to finding real-life applications for zero-knowledge proofs is that in many potential applications, it is unclear what \emph{precisely} the formal statement one wishes to prove is. (Such clarity is crucial since one learns nothing beyond the correctness of this specific formal statement.) Our paper identifies a domain in which there is wide consensus regarding the precise formal statement of interest (IR and IC).} The additional information in the running-step message then contains a \emph{zero-knowledge proof} that the declared outcome really is the outcome of the mechanism committed to in the commitment-step message, when applied to the player's reported type. Throughout this paper, footnotes that discuss noteworthy departures from standard cryptography, and which are aimed at readers with background in cryptography or who are interested in more technical discussions of contrasts with standard cryptography, are marked with a wrench sign.\cryptofootnote{See, e.g., the footnotes on our definition of committing and hiding protocols (\cref{committing,hiding}).}

Admittedly, there are many components involved, which might be overwhelming. For this reason, before we start defining our model, in \cref{examples} we present four illustrative examples that gradually demonstrate how everything fits together. For each example---ranging from a simple hidden price, through multiple hidden prices and proving incentive properties, to a randomized mechanism with only the realized outcome ever revealed---we consider a simple pricing or auction setting, and give a complete, self-contained construction of our protocol that does not rely on any cryptographic background, and delegates no part of the construction to external sources. To do this in a (relatively) simple, self-contained manner, we devise novel zero-knowledge proof techniques tailored to handle operations common in selling mechanisms with relative simplicity. In particular, we develop a simple way to commit to numbers (e.g., prices) and to give zero-knowledge proofs of properties such as inequalities between these numbers.\footnote{As we show, our novel construction easily generalizes also to arbitrarily complex committed information and proven properties.} The resulting novel protocols, built using simple tools, are (relatively) accessible, computationally very light, and can be easily implemented in practice.

After reviewing preliminaries in \cref{prelims}, in \cref{framework} we formulate our general framework and desiderata. In \cref{general-existence}, we present our general existence theorem. We emphasize that this theorem goes far beyond the four simple auction examples from \cref{examples}, and is completely general in terms of the settings and mechanisms hidden, with no functional-form restrictions. To give just two starkly different examples, our theorem can be used, on the one hand, to hide minute details of very complex mechanisms (e.g., hide the details of an intricate multidimensional randomized screening mechanism), and, on the other hand, to hide major qualitative features of mechanisms (e.g., in a matching setting, hide whether the mechanism that is run is Serial Dictatorship, Deferred Acceptance, or Top Trading Cycles). The proof of our general existence theorem is relegated to the appendix. Unlike our illustrative examples, this proof leverages state-of-the-art cryptographic tools in the construction of the protocol, and as such has to navigate the gap between our economics-centered desiderata and the computer-science cryptographic guarantees of these tools.\footnote{To make this proof (and the gaps that it navigates) as transparent as possible to readers without background in cryptography, in \cref{cp-sec} we restate the results from the cryptographic literature that we use in the proof by formulating a general, flexible ``one-stop-shop wrapper'' around all the existing cryptographic tools and definitions that are needed in our proof. In \cref{proofs}, in turn, we prove our general existence theorem utilizing the reformulated framework from \cref{cp-sec}. We hope that the self-contained nature and relatively lower barrier to entry of the reformulated, consolidated framework from \cref{cp-sec} might make it appealing also for independent use in other economic theory papers that may wish to avoid the learning curve associated with directly engaging with numerous papers from the cryptographic literature (as we do in \cryptocref{cp-proof-sec} in the supplementary material when we map out how the framework from \cref{cp-sec} follows from existing results).} We conclude in \cref{discussion}.

In \cref{contracts} in the supplementary material, we show that our framework applies also to contract-design settings with moral hazard (i.e., mechanisms with private actions rather than private types).\footnote{Our framework applies also to mechanisms with both private types and private actions.}\textsuperscript{,}\footnote{Additional extensions were introduced in an earlier version of this paper, including how to reduce the communication requirements \citep[in the sense of][]{NisanS2006,Segal2007} of our protocols to be on par with the communication requirement of running a commonly known direct-revelation mechanism, how players can commit to hidden strategies in sequential games while only proving certain properties of these strategies, and how to implement (without a mediator) a correlated equilibrium that is known \emph{only} to a principal and to none of the players, by proving only obedience and faithful randomization to the players. We omit these extensions here for brevity and to keep the exposition focused.} Using our framework, a principal can \textbf{commit to any contract} while proving arbitrary properties of the contract (such as IR, limited liability, and what the optimal level of effort to exert is), and eventually also prove that the contract was adhered to (i.e., that the wage paid is as specified by the contract for the publicly observable realized returns), all without revealing to the agent who agrees to the contract any further information about the contract (except what can be inferred from the combination of the proven properties and the realized returns and wage).\footnote{\cite{EdererHM2018} identify settings in which ex-ante committing to an ``opaque contract'' strictly dominates the best transparent contract. Our framework allows one to create such a commitment without any mediators or preexisting trust.}
For example, a common form of contract for a principal to offer to an agent is a fixed fraction of the net revenue (i.e., returns less the principal's cost). Using our framework, a principal can commit to follow such a contract that is hidden from the agent, while proving to the agent that the contract is of this functional form and incentivizes effort (i.e., that the wage expected by the agent if investing effort covers the agent's effort costs), and eventually prove that the contract was adhered to. The agent would eventually learn the overall returns and her wage, but would learn neither the precise fraction of the revenue that she received nor the principal's precise cost (she would only learn some joint function of these two values). These details of the contract that the agent accepted would remain forever hidden from the agent. Among other uses, hiding contracts allows a principal to hire several agents without allowing them to compare the detailed terms of their contracts.\footnote{\cite{CullenP2022} find that an increase in employees' perception of their peers' salaries has a negative causal effect on effort and performance.} 

While our framework is theoretically sound for use by ``Homo economicus'' agents and AI agents, one may rightfully wonder about the applicability of our framework for real-life use by humans. To answer this question, we once again recall a success story of introducing equally technical mathematics into mass use: that of web-browser communication encryption, symbolized by the browser's lock icon. Users who surf the World Wide Web are spared from directly interacting with the intricate mathematics required for encrypting their browser communication, since the browser software handles all of the mathematics. The browser software itself is trusted by users because it is written by a reputable party and audited by many independent experts. Note, however, that this reputable party is not a mediator in the traditional sense: it is never entrusted with the browser communication; this communication only exists unencrypted on the user's computer and on the website server. This model has worked successfully for three decades, and a very similar model could be used for implementing the machinery in this paper: A reputable party could write the software for the mechanism designer and players, which would be audited by independent experts.\footnote{This could possibly even be part of the browser software in the case of online auctions, where an icon similar to the browser's lock icon---perhaps a stylized check mark (leveraging its older sibling icon's decades-old credibility)---would confirm the validity of claims made by the designer.} The mechanism description would exist unencrypted only on the mechanism designer's computer, yet its properties and outcome would be trusted by all users.\footnote{In this setting, our results guarantee that the mechanism description is 
kept secret even if players use any other software, and the mechanism properties and outcome can be relied on even if the mechanism designer uses any other software. Each user need only trust that the software she is using is coded properly (as one trusts any other software on one's computer).}

\subsection{Related Literature}

Several papers, mostly within computer science, aim to apply cryptography to economics. Closest to our paper is the literature within mechanism design that focuses on the privacy of player types/valuations (for auctions, see, e.g., \citealp{Brandt2006,ParkesRST2008,ParkesRT2009,MicaliR2014,FerreiraW2020}; and see \citealp{BogetoftCDGJKNNNPST09} for a real-life application to auctioning allowances for growing sugar beets in Denmark; for matching mechanisms, see, e.g., \citealp{DoernerES2016}; for more general mechanisms, see \citealp{IzmalkovLM2011}; see \citealp{HauptH2022} for a non-cryptographic approach). In these papers, in contrast to ours, the mechanism is \emph{commonly known}, and the challenge is to run this commonly known mechanism without revealing the types of the players. Our focus is different, conceptually and technically: We focus on the secrecy of \emph{the mechanism itself}, which introduces new challenges both on the modeling/desiderata side---e.g., modeling commitment without disclosure---and on the technical side---e.g., proving/certifying certain properties of the hidden information (e.g., IR and IC; or IR, limited liability, and the optimal level of effort to exert). In the age of social networks, it is indeed not far-fetched to have settings in which companies value their privacy considerably more than users do. 

Another feature of this previous literature on cryptography and economics is that hiding the players' types without a mediator or physical means involves changing the extensive game form, enlarging the strategy spaces of the players, which creates theoretical room for various threats, considerably degrading strategic properties. In contrast, our framework allows for clear separation between the cryptographic and game-theoretic layers: Our protocol---despite hiding the mechanism---is a direct-revelation protocol, maintaining the same strategy space for each player as in the traditional protocol.\footnote{\label{hide-types}An extension of our framework that combines our approach with that of this previous literature, allowing both the mechanism and type reports to be hidden (with the same caveats as in this previous literature), was described in an earlier version of this paper. We omit it here for brevity and to keep the exposition focused.}

Also at the interface of cryptography and economics is the growing literature on blockchain protocols such as Bitcoin and Ethereum, reviewing even a small portion thereof is outside the scope of this paper. While one could certainly deploy the machinery that we develop in this paper on top of a blockchain (e.g., enrich smart contracts to implement our protocol), running a blockchain involves a complex infrastructure and monetary reward model that give rise to intricate incentives, which is not at all required by our machinery. Furthermore, in stark contrast with running a blockchain, the computational resources required by all of the illustrative examples that we give in \cref{examples} are very low. (So low, in fact, that they could be directly run even on the buyer's and seller's smartwatches.)

\section{Illustrative Examples}\label{examples}

In this \lcnamecref{examples}, we present a sequence of four examples, all within fundamental auction settings, each gradually building on its predecessor and illustrating an additional key property of our framework. In each example, we give a complete, self-contained construction. To do so, we devise novel zero-knowledge proof techniques tailored to handle operations common in selling mechanisms with relative simplicity. The resulting novel protocols are not only (relatively) accessible, but also computationally very light and easily implementable in practice.

The first two examples demonstrate the idea of a commitment to a hidden mechanism and proving that it is properly run. 
The third example adds an additional layer: proving that the hidden committed mechanism satisfies any property of interest (in that example, incentive compatibility). 
Finally, the fourth example deals with randomized mechanisms---i.e., mechanisms that map buyer reports to \emph{distributions} over outcomes---and shows how to ensure that the buyer learns nothing about the mechanism \emph{or even the distribution}, except for the \emph{realized} outcome---a single draw from this distribution.

While the constructions described in this \lcnamecref{examples} provably satisfy strong desiderata, our goal in this \lcnamecref{examples} is to give sufficient information and intuition for the reader to see how the claims of the previous paragraph might \emph{plausibly} be achieved, without fully proving this or formalizing what we mean by these claims. Our desiderata are formulated in \cref{framework}, and our theorem on the existence of a general construction is presented in \cref{general-existence}.
For completeness, in \cref{app:examples} in the supplementary material we prove in full detail that the desiderata from \cref{framework} are satisfied by the specific construction given in the first two examples in the current \lcnamecref{examples}; the proofs for the latter two examples are similar.

\subsection[Illustrative Example 1:\texorpdfstring{\\}{ }Commitment to Hidden Mechanism \&\ Proof of Properly Running It]{Illustrative Example 1: Commitment to Hidden Mechanism \& Proof of Properly Running It}

We start with the simplest possible example within an auction setting: a seller offering an item to a buyer for some price.\footnote{We avoid the term ``posted price'' since the price is secret and therefore not posted anywhere.} The buyer has a private value $v\in\{0,\ldots,H\!-\!1\}$ for the item, for some commonly known $H$, and the seller wishes to commit to a price $s\in\{0,\ldots,H\!-\!1\}$. For simplicity, we assume that $H$ is a power of $2$. The challenge is that the price should remain hidden if trade does not take place. More specifically, the buyer should learn nothing about $s$ before disclosing $v$, and if trade does not take place, then the buyer should learn that $s>v$ but learn no more than that. The buyer should, of course, be convinced that there really is a commitment here, i.e., that the seller has no way to alter the (hidden) price after she has set it. We describe the process by which the seller commits to the price, and then describe the process by which she proves that the outcome that she announces (i.e., whether trade takes place and at which price) correctly corresponds to the hidden price. The correctness of these processes relies on an algebraic construction that we also use in later examples in this \lcnamecref{examples}, for which we now give some background.\footnote{While this algebraic setup might at first glance make our examples seem impractical, we note that the same could have been equally (if not more so) said in the 1990s about the cryptography that powers the web browser with which you securely accessed this paper. As discussed in the introduction, just as the latter entered mainstream use, we conjecture that so could our techniques.}

\paragraph{Algebraic setup}
Let $p$ be a large prime such that $q=2p+1$ is also prime (e.g., $p=3$ and $q=7$, though we will use much larger primes).\footnote{Such a prime $p$ is called a \emph{Sophie Germain prime} and such a prime $q$ is called a \emph{safe prime}. Since we are interested in constructing a large group---whose elements are infeasible to enumerate one by one---a better example than $p=3$ and $q=7$ would be, e.g., $p=2618163402417\cdot2^{1290000} - 1$ and $q=2618163402417\cdot2^{1290001} - 1$, which, as of May 2026, are the largest known primes with this property (though infinitely many are conjectured to exist). See \url{https://primes.utm.edu/primes/page.php?id=121330} and \url{https://primes.utm.edu/primes/page.php?id=121331}.}
Let $G=\bigl\{h^2\!\bmod q\mid h\in\{1,\ldots,q-1\}\bigr\}$ be the set of all square integers modulo $q$. For every $g\in G\setminus\{1\}$, it is well known that $g^p\bmod q=1$ and that this is not true for any smaller positive power of $g$.\footnote{The first part is true because by construction $g=h^2\bmod q$ for some $h$, and by Fermat's little theorem, $g^p\bmod q=h^{2p}\bmod q=h^{q-1}\bmod q=1$ because $q$ is prime. For the second part, let $\ell>0$ be the smallest positive integer such that $g^\ell\bmod q=1$. $\ell$ must divide $p$ since otherwise the greatest common divisor of $\ell$ and $p$ would be an even smaller such integer. Since $p$ is prime, $\ell=p$.}
Henceforth, all operations involving elements of $G$ are modulo $q$.\footnote{The set $G$, which we use for simplicity of exposition, is a well-known special case of an \emph{(algebraic) group} of some large prime \emph{order} $p$.} 
Since for every $g\in G\setminus\{1\}$, the smallest positive power of $g$ that equals $1$ is $p$, any such $g$ is a \emph{generator} of~$G$, i.e., for every such $g$ and every $h\in G$ there exists a unique $\ell\in\{1,\ldots,p\}$ such that $h=g^\ell$. Hence, for every $g,h\in G\setminus\{1\}$ there exists a unique $\ell\in\{1,\ldots,p\!-\!1\}$ such that $h=g^\ell$. This $\ell$ is called the \emph{discrete logarithm (in $G$) of $h$ base~$g$}. It is widely believed by cryptographers and computational complexity theorists that if $G$ is large, then for uniformly random $g,h\in G\setminus\{1\}$ it is a hard problem to calculate the discrete logarithm of $h$ base~$g$, that is, it is (computationally) \emph{infeasible} to do so except with negligible probability (corresponding, e.g., to an extremely lucky very-low-probability guess of the discrete logarithm).\footnote{For simplicity, the protocols we present in this section are under the assumption that no agent (seller or buyer) has access to a quantum computer. Should quantum computers become a reality, then these protocols would need to be modified, as such computers are expected to be able to feasibly calculate discrete logarithms \citep{Shor97}. There are several alternative ``quantum safe'' hard problems on which cryptography can be based \citep{postquantum,NISTpostquantum}. In practice, such alternatives have not replaced standard cryptographic algorithms in many major applications.} The committing and hiding properties in this pricing example will follow from this infeasibility: Were anyone able to circumvent these properties, they would also be immediately able to feasibly solve this problem, disproving its hardness. We note that the security of many standard current cryptographic schemes for protecting internet communication hinges on this infeasibility. That is, anyone able to feasibly solve this computational problem would be able to break into many real-life cryptographic systems.\footnote{While we present our protocols in this \lcnamecref{examples} for the group $G$ for simplicity, these protocols work with any group of prime order where computing discrete logarithms is hard, such as elliptic curve groups, where the hardness of this problem is the basis for major security standards.}

\paragraph{Commitment scheme}
The commitment process is as follows: nature draws two elements $g,h\in G\setminus\{1\}$ uniformly at random, from which point $g$ and $h$ are commonly known.
To commit to a price $s\in\{0,\ldots,H\!-\!1\}$, write $s$ in binary representation: $s=s_1\ldots s_{\log_2H}$, i.e., $s=\sum_{i=1}^{\log_2H}s_i\cdot2^{\log_2H-i}$. For each $i\in\{1,\ldots,\log_2H\}$, the seller independently draws $\rho_i \sim U\{1,\ldots,p-1\}$, and calculates the \emph{commitment}~$C_i\in G\setminus\{1\}$ to the bit $s_i$ as $C_i=g^{\rho_i}$ if $s_i=0$, and as $C_i=h^{\rho_i}$ if $s_i=1$. (Recall that these powers are modulo $q$.)
The seller then sends $(C_1,\ldots,C_{\log_2H})$ to the buyer as a \emph{commitment} to the (unknown to the buyer) price~$s$.
Why do we consider this a hiding commitment? First, regarding hiding: For each~$i$, when $\rho_i$ is drawn uniformly at random, for each $f\in\{g,h\}$ the value $f^{\rho_i}$ is a uniformly random element of $G\setminus\{1\}$, and hence without knowing $\rho_i$, the value $f^{\rho_i}$ leaks no information about whether $f=g$ or $f=h$, and hence no information about $s_i$. Second, regarding commitment: This hinges on the fact that (with high probability over the draw of $g$ and $h$) the seller is unable to come up with $C_i,\rho_i,\rho_i'$ such that $g^{\rho_i}=C_i$ and at the same time $h^{\rho_i'}=C_i$.
Indeed, if the seller could have come up with such $C_i,\rho_i,\rho_i'$ then she could have feasibly calculated $\ell=\rho_i\cdot {\rho'_i}^{-1}\bmod p$,\footnote{Unlike discrete logarithms, the inverse ${\rho'_i}^{-1}\bmod\;p$ (which is well defined since $\rho'_i\ne 0\bmod\;p$) is easy to feasibly calculate: by Fermat's little theorem, ${\rho'_i}^{p-1}\bmod\;p=1$ because $p$ is prime, and therefore, ${\rho'_i}^{-1}\bmod\;p={\rho'_i}^{p-2}\bmod\;p$.} which satisfies $h=g^\ell$, and hence such a seller would have succeeded in feasibly calculating the discrete logarithm of $h$ base~$g$, which, by assumption, is computationally infeasible. Therefore, if $s_i=0$ then the seller knows the discrete log of $C_i$ base $g$ but not base $h$, and otherwise knows the discrete log of $C_i$ base $h$ but not base $g$. As we will see below, this prevents the seller from acting according to any price that does not equal $s$.

\paragraph{Proof of properly running the hidden mechanism} After the seller sends the commitment $(C_1,\ldots,C_{\log_2H})$ to the buyer, the buyer discloses her value $v$. If the price $s$ is no more than $v$, then the seller discloses $s$ along with the values $\rho_i$, which allow the buyer to verify the value of $s$ (recall that the seller only knows how to ``open'' each bit this way to $0$ or to $1$ but not to both), and trade takes place at price $s$. On the other hand, if the price $s$ is more than~$v$ then the seller must prove to the buyer that this is the case, without disclosing any additional information about $s$. We now describe how this is done.

Before we continue, consider for a moment two numbers in binary representation, $x=x_1\ldots x_{\log_2H}$ and $y=y_1\ldots y_{\log_2H}$.
Observe that $x<y$ if and only if there exists an $i$ such that both (1) $x_i<y_i$ and (2) $x_j\leq y_j$ for all $j<i$. Rephrasing this statement, $x\geq y$ if and only if for all $i\in\{1,\ldots,\log_2H\}$, either $x_i\geq y_i$, or else there exists some $j<i$ such that $x_j>y_j$. Equivalently, for all $i\in\{1,\ldots,\log_2H\}$ such that $y_i=1$, either $x_i=1$ or else $x_j=1$ for some $j<i$ such that $y_j=0$.

Back to our seller and buyer, let $v_1\ldots v_{\log_2H}$ be the binary representation of~$v\!+\!1$ (assuming $v < H-1$; otherwise, trivially $v\ge s$, trade takes place, and no proof is needed).
To prove that $s>v$, or equivalently that $s\geq v\!+\!1$ (where $v$ is commonly known and~$s$ is given implicitly via the commitment $(C_1,\ldots,C_{\log_2H})$), by the previous paragraph it suffices for the seller to prove, for each $i$ such that $v_i=1$, that either $s_i=1$ or else $s_j=1$ for some $j<i$ such that $v_j=0$. That is, for each such $i$ it suffices for the seller to prove that she knows the discrete logarithm base~$h$ of at least one of $\bigl\{C_j~\big|~j=i \vee (j<i \wedge v_j=0)\bigr\}$. Fortunately, a procedure for an agent to prove that she knows a discrete logarithm with a given base of one of a set of numbers, without revealing anything about for which of those numbers she knows this, is well known \citep{CDS94}. Specifically, there is a procedure that has the following three properties. First, there is only negligible probability that a seller who does not know the discrete logarithm base $h$ of any number in the set does not get discovered during this procedure. Second, the procedure reveals nothing beyond the seller knowing the discrete logarithm base $h$ of one of the numbers in the set. Third, participating in the procedure requires only modest computational power. For completeness, this procedure is given in \cref{cds} in the supplementary material.

\paragraph{On interaction} The main difference between this example (as well as the other examples in this \lcnamecref{examples}) and what our general framework in \cref{framework,general-existence} guarantees is that in this example, the final proof is interactive, i.e., the procedure for proving knowledge of a discrete log actively involves the buyer, and in particular involves the buyer sending a random message (drawn by the buyer) to the seller. While adding interaction might in general introduce game-theoretic complications as it enlarges the strategy space and allows for various threats, this is not a problem here, even if there are multiple buyers, as all buyers finish disclosing their types already before this point, so no threats are possible. Nonetheless, in our general construction any interaction here is avoided, and the proof of properly running the hidden mechanism consists of a single message sent by the seller. Our general framework provably achieves this by leveraging a more elaborate cryptographic scheme.\footnote{One way in which this non-interactivity on the buyer's side could be achieved at least in practice using the above cryptographic scheme, in this example and those in the remainder of this \lcnamecref{examples},
is to replace the random message sent by the buyer with bits obtained by applying a cryptographic hash function to all previous seller messages; as these bits are not known before those messages are constructed, they are in some sense ``random enough from the point of view of when messages are constructed.'' This technique, known as the \emph{Fiat--Shamir transform} \citep{FS86}, is used by countless real-life cryptographic systems.}

\subsubsection{Multiple Buyers}

In the above example we analyze a single-buyer setting for simplicity; however, everything described, with no material change, can also be used to implement a second-price auction with the hidden price as a reserve. Indeed, once the type reports are common knowledge, so is the highest bidder. There are then three possible scenarios:
\begin{itemize}
    \item The reserve price is above the highest bid. In this case, the seller proves this without revealing the hidden reserve, as in the single-buyer setting, and no trade takes place.
    \item The reserve price is weakly below the highest bid, and strictly above the second-highest bid. In this case, the seller reveals the reserve price as in the single-buyer setting, and trade takes place at this price.
    \item The reserve price is weakly below the second-highest bid. In this case, the seller proves this without revealing the hidden reserve, and trade takes place with the second-highest bid as price. While the inequality to be proved here is in the reverse direction to that proved in the single-buyer setting, an analogous technique still works: if the (hidden) price is $s=s_1\ldots s_{\log_2H}$ and the second-highest bid is $v^{(2)}=v^{(2)}_1\ldots v^{(2)}_{\log_2H}$, then the seller proves that for each $i$ such that $v^{(2)}_i=0$, either $s_i=0$ or else $s_j=0$ for some $j<i$ with $v^{(2)}_j=1$.
\end{itemize}
Since it is commonly known that only the highest bidder might participate in trade, the seller can reveal/prove the above either to all bidders or only to the highest bidder.

\subsection{Illustrative Example 2: Never Fully Revealing All Details}

A simplifying feature of Illustrative Example 1 is that if the item is sold, then the mechanism is fully revealed. Our next example is arguably one of the simplest within an auction setting where the mechanism is never fully revealed. Here we have a seller offering not one but two items to a unit-demand buyer, each at some price. The buyer has a private value $v_i$ for each item $i\in\{1,2\}$, and the seller wishes to commit to two prices $s^1,s^2\in\{0,\ldots,H\!-\!1\}$ for a commonly known $H$. Sale takes place if $v_i\ge s^i$ for some $i$, and the item sold is an item whose sale maximizes the buyer's utility.

The requirement here is that the price of any unsold item should remain hidden. (Since the buyer has unit demand, there is always at least one unsold item.) More specifically, the buyer should learn nothing about any price before disclosing $v$. If no trade takes place, then the buyer should learn that $s^1>v_1$ and $s^2>v_2$ but learn no more than that. If some trade does take place, then the buyer should only learn about the price of the unsold item that it is high enough so that if the buyer were to purchase that item at that price, then her resulting utility would be no greater than that from buying the item actually sold at its price, that is, $s^i\ge s^j-v_j+v_i$, where $i$ is the unsold item and $j$ is the sold item.

The construction in this case is very similar to that of Illustrative Example~1: The seller commits to $2\log_2H$ bits (each bit of each of the two prices) using the same scheme. After the buyer discloses her values $v_1,v_2$, if any item is sold then the seller reveals its price (along with the random values used to create the commitments to it), and for any item $i$ not sold, the seller proves that $s^i$ is weakly greater than some commonly known value: either that $s^i\ge v_i+1$ if the other item is not sold, or that $s^i\ge s^j-v_j+v_i$ if the other item $j$ is sold. This is done using the same method as in Illustrative Example 1.

\subsection[Illustrative Example 3: Proof of Incentive Compatibility]{Illustrative Example 3: Proof of Incentive Compatibility}\label{example-ic}

A simplifying feature of Illustrative Examples 1 and 2 is that every mechanism that could be committed to using the schemes in these examples is inherently IC (and IR). Our next example is arguably one of the simplest within an auction setting that do not exhibit this feature. Here we again have a seller offering a single item to a buyer, but not at a single price. Rather, there is a price $s^1$ for the initial $0.5$ probability of winning the item, and a price $s^2$ for the remaining $0.5$ probability of winning. More specifically, the mechanism is defined as follows, where $v$ is the value reported by the buyer:
\begin{itemize}
    \item If $\nicefrac{v}{2} < s^1$, then the buyer gets nothing and pays nothing.
    \item Otherwise, if $\nicefrac{v}{2} < s^2$ then the buyer gets the item with probability $0.5$ and pays $s^1$. (In this example, we assume the lottery that awards the item with probability $0.5$ is held publicly. This need not be the case---see the next example.)
    \item Otherwise, the buyer gets the item with probability $1$ and pays $s^1+s^2$.
\end{itemize}

\begin{lemma}
This mechanism is IC if and only if $s^1\le s^2$.
\end{lemma}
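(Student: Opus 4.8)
The plan is to reinterpret the mechanism as offering the buyer a fixed three-item menu, and then to invoke the elementary fact that a single-agent direct-revelation mechanism is IC if and only if every reported type $v$ is assigned an outcome that maximizes that type's (quasilinear) utility among all outcomes in the \emph{image} of the mechanism (since the buyer's achievable set, as $v$ varies over possible reports, is exactly this image). Writing the three possible outcomes as $A$ (no item, no payment), $B$ (item with probability $\nicefrac12$, payment $s^1$), and $C$ (item with certainty, payment $s^1+s^2$), a buyer with true value $v$ derives utilities $u_A=0$, $u_B(v)=\nicefrac v2-s^1$, and $u_C(v)=v-s^1-s^2$. These are three affine functions of $v$ with strictly increasing slopes $0<\nicefrac12<1$, so their upper envelope is piecewise linear and, as $v$ grows, switches from $A$ to $B$ to $C$ in that order, with the $A/B$ crossover at $v=2s^1$, the $B/C$ crossover at $v=2s^2$, and the $A/C$ crossover at $v=s^1+s^2$. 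Truthful reporting is optimal precisely when the mechanism assigns each $v$ the outcome lying on this envelope.

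For the forward direction I would assume $s^1\le s^2$, so that $2s^1\le 2s^2$ and the crossover points are correctly ordered, and then verify region by region that the mechanism's assignment coincides with the envelope. For $v<2s^1$ one has $u_A\ge\max\{u_B,u_C\}$, where $u_C\le u_A$ follows from $s^1+s^2\ge 2s^1$, and the mechanism outputs $A$; for $2s^1\le v<2s^2$ one has $u_B\ge u_A$ (as $v\ge 2s^1$) and $u_B\ge u_C$ (as $v\le 2s^2$), and the mechanism outputs $B$; for $v\ge 2s^2$ one has $u_C\ge u_B$ and $u_C\ge u_A$ (using $s^1+s^2\le 2s^2$), and the mechanism outputs $C$. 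Indifferences at the two boundaries are harmless, since truthful reporting remains weakly optimal there, so the mechanism is IC.

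For the converse I would argue the contrapositive. If $s^1>s^2$, then $2s^1>2s^2$, so the middle branch condition $s^1\le\nicefrac v2<s^2$ is vacuous and outcome $B$ is never produced; the effective menu collapses to $\{A,C\}$, with the mechanism switching from $A$ to $C$ at $v=2s^1$. But along the two-item menu $\{A,C\}$ the buyer's genuinely optimal switch point is the $A/C$ crossover $v=s^1+s^2$, and $s^1>s^2$ forces $s^1+s^2<2s^1$. Hence every true type $v$ in the nonempty interval $(s^1+s^2,\,2s^1)$ is assigned $A$ (utility $0$) while strictly preferring $C$ (utility $v-s^1-s^2>0$), which it obtains by over-reporting any $r\ge 2s^1$. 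This is a strictly profitable deviation, so the mechanism is not IC.

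The main obstacle I anticipate is the converse rather than the forward direction: one must first notice that raising $s^1$ above $s^2$ does not merely reorder the three regions but actually \emph{deletes} the probability-$\nicefrac12$ option from the image of the mechanism, and only then compare the mechanism's switch point $2s^1$ against the true indifference point $s^1+s^2$ of the surviving two-item menu. A secondary point to handle carefully is the treatment of boundary types and weak-versus-strict inequalities (and, if the type space is taken to be discrete rather than a continuum, ensuring that the profitable-deviation interval contains an admissible type); these are routine once the upper-envelope picture is in place.
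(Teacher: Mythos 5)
Your proposal is correct and takes essentially the same route as the paper: the forward direction is the same case-by-case comparison of the three outcomes' utilities (which the paper compresses into ``case analysis shows,'' hinging on the same key inequality $\nicefrac{v}{2}<s^1\Rightarrow v<s^1+s^2$ when $s^1\le s^2$), and your converse is literally the paper's argument---every type in $(s^1+s^2,2s^1)$ truthfully gets utility $0$ but profits strictly by reporting $2s^1$ to buy outright. The menu/upper-envelope framing and your caveat about discrete type spaces are sensible packaging and a fair observation, but they do not change the substance of the argument.
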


\begin{proof}
If $s^1\le s^2$, then $\nicefrac{v}{2}<s^1$ implies $v<s^1+s^2$, so case analysis shows that the buyer always gets the utility-maximizing outcome by reporting truthfully. Otherwise, i.e., if $s^1>s^2$, then $s^1+s^2<2s^1$, so for every $v\in(s^1+s^2,2s^1)$, the buyer has an incentive to misreport by claiming her value is $2s^1$, since truthful reporting gets her nothing (utility $0$) but reporting $2s^1$ gets her the item for sure (utility $v-(s^1+s^2)>0$).
\end{proof}

Given this lemma, after the seller commits to $s^1$ and~$s^2$, to learn that the mechanism is IC, we want the buyer to learn that $s^1\le s^2$ but learn no more than that.\footnote{This would teach the buyer that the mechanism is IC for all buyer values. The IC condition only for values in $\{0,\ldots,H\!-\!1\}$ is slightly more nuanced and not used for simplicity, but can be handled similarly.} Analogously to the previous two examples, if the buyer pays nothing and gets nothing, then the buyer should learn that $\nicefrac{v}{2} < s^1$ but learn no more than that; if the buyer gets the item with probability $0.5$ for a price of $s^1$, then the buyer should learn the value of $s^1$ and that $\nicefrac{v}{2} < s^2$ but learn no more than that; and if the buyer gets the item with probability~$1$ for a price of $s^1+s^2$, then the buyer should learn the value of $s^1+s^2$, but learn no more than that.

Most of the construction in this case is again similar to that of the previous examples. We use the same commitment scheme to commit to every bit of $s^1$ and $s^2$. Each of the inequalities that might have to be proven after the outcome is chosen (i.e., the inequalities showing that one of the prices is greater than the then-commonly-known $\nicefrac{v}{2}$) can be proven by the seller as in the previous examples, and any price can be revealed as in the previous examples. 
There are therefore two added challenges: proving the inequality $s^1\le s^2$ between the two hidden prices (which proves IC), and proving what $s^1+s^2$ is without revealing any additional information about $s^1$ or $s^2$. We will now explain how to do each of these. Let $s^1=s^1_1\ldots s^1_{\log_2H}$ and $s^2=s^2_1\ldots s^2_{\log_2H}$ be the binary representations of the two prices and let $C^1_1,\ldots,C^1_{\log_2H},C^2_1,\ldots,C^2_{\log_2H}$ be the respective commitments to the bits.

We start with proving that $s^1\le s^2$. Similarly to the proofs above, it suffices to show that for every $i\in\{1,\ldots,\log_2H\}$, either $s^1_i\le s^2_i$ or there exists $j<i$ such that $s^1_j< s^2_j$. That is, the seller should prove that for every $i$, one of the following holds: (a)~$s^1_i=0$, or (b)~$s^2_i=1$, or (c)~there exists $j<i$ such that both $s^1_j=0$ and $s^2_j=1$. That is, for every such~$i$ the seller should prove that she knows one of the following three: (a)~the discrete log base $g$ of $C^1_i$, or (b)~the discrete log base $h$ of $C^2_i$, or (c)~for some $j<i$, both the discrete log base $g$ of $C^1_j$ and the discrete log base $h$ of $C^2_j$. While this statement is of a more general form than the statements in the previous examples (which are of the form ``the seller knows the log base $h$ of one of a given set of elements of $G$''), in \cref{cds-new} in the supplementary material we describe a new extension, for this more general form of statement, of the procedure used for proving the simpler statements in the previous examples. This procedure maintains all of the desired properties. The seller proves the statement using this procedure.

The final challenge is proving what the value of $s^1+s^2$ is (in case the buyer gets the item with probability $1$). Let $s=s^1+s^2$ and write $s$ in binary representation: $s=s_0\ldots s_{\log_2H}$, i.e., $s=\sum_{i=0}^{\log_2H}s_i\cdot2^{\log_2H-i}$. Inductively define the ``carry'' bits $c_1,\ldots,c_{\log_2H}$ used in the calculation of $s^1+s^2$: for every $i\in\{1,\ldots,\log_2H\}$, starting with $i=\log_2 H$, we define $c_i=1$ if and only if at least two of the following three conditions hold: (a) $s^1_i=1$, (b) $s^2_i=1$, (c) both $i<\log_2H$ and $c_{i+1}=1$.

The seller commits to each of these $\log_2H$ carry bits using the same commitment scheme, and uses these commitments to prove that each step of the calculation of $s^1+s^2$ was properly executed without revealing $s^1$ or $s^2$.
To see how this can be done, consider for example how one might prove that $s_5$ and $c_5$ were properly calculated. Let $C^1_5,C^2_5,C_5',C_6'$ be the respective commitments to $s^1_5,s^2_5,c_5,c_6$ (assume that $\log_2 H\ge 6$ for the purposes of this example). We must show that $s_5=s^1_5+s^2_5+c_6\bmod2$ and that $c_5$ satisfies its above inductive definition. Thus, we must prove that $(s^1_5,s^2_5,c_5,c_6)$ is one of four specific valid quadruples $(b_{1,1},\ldots,b_{1,4}),\ldots,(b_{4,1},\ldots,b_{4,4})$ (which four quadruples are valid depends on the value of $s_5$, which the seller sends to the buyer). Defining $g_{i,j}=g$ if $b_{i,j}=0$ and $g_{i,j}=h$ otherwise, this is equivalent to showing that there exists $i\in\{1,\ldots,4\}$ for which the seller knows $(\rho_1,\ldots,\rho_4)$ such that $g_{i,1}^{\rho_1}=C^1_5$, $g_{i,2}^{\rho_2}=C^2_5$, $g_{i,3}^{\rho_3}=C_5'$, and $g_{i,4}^{\rho_4}=C_6'$. This form of statement (``there exists $i\in\{1,\ldots,k\}$ for which the seller knows $(\rho_1,\ldots,\rho_m)$ such that $g_{i,j}^{\rho_j}=C_{i,j}$ for every $j=1,\ldots,m$,'' where $g_{i,j}$ and $C_{i,j}$ are commonly known elements in $G$ for every $i,j$)
is a further generalization of the one used for proving that $s^1\le s^2$, and yet our new extension, from \cref{cds-new} in the supplementary material, of the procedure used for proving the simpler statements used so far can prove this form of statement as well.

While committing to carry bits might seem to be very specialized for calculating sums, we note that the method described for proving the value of $s^1+s^2$ is in fact completely general: It can be used to prove the proper computation of any function, taking any number of committed values as input, by simply committing to all of the ``intermediate bits'' of the computation of the function, and using our new generalized procedure from \cref{cds-new} to prove that each intermediate bit is properly calculated. Therefore, for any way of describing (single- or multi-buyer) mechanisms using bits, our new generalized procedure from \cref{cds-new} can be used to prove that the described mechanism is IR and IC (or satisfies any other property of interest), and after the buyer type is reported and the outcome announced, that the committed mechanism was properly run.\footnote{In multi-player settings in which the outcome naturally specifies an individual outcome for each buyer (e.g., what she gets and how much she pays, or what her match is), instead of announcing the entire outcome to all buyers and proving that it is indeed the result of the committed mechanism, the seller could announce to each buyer only her individual outcome, and prove to her only that it is indeed her individual part of the outcome resulting from the committed mechanism, revealing no additional information about the other parts of the outcome but maintaining credibility (cf.\ \citealp{AkbarpourL2020}) through our machinery.}

\subsection[Illustrative Example 4: Proof of Properly Running a Randomized Mechanism]{Illustrative Example 4: \texorpdfstring{\\}{ }Proof of Properly Running a Randomized Mechanism}

A simplifying feature of Illustrative Examples 1, 2, and 3 is that only the deterministic part of the mechanism is hidden, i.e., either the mechanism is deterministic and only the outcome is revealed, or it is randomized and the full outcome distribution is revealed (followed by public coin flipping to draw the realized outcome). Our next example is arguably one of the simplest within an auction setting where only the realized outcome is revealed.\footnote{\cite{EilatEM2023} identify the realized outcome of an auction with similar randomization to that of our example as more private (under a metric that they study) than any deterministic auction with the same or greater revenue. The use of randomness within mechanisms might therefore be motivated for secrecy even when a deterministic mechanism with the same revenue exists.}

Here, as in Illustrative Example 1, we again have a seller offering an item to a buyer for some hidden price $s$; however, this price is only an expected price: if trade takes place then instead of the buyer paying $s$ with certainty, the buyer pays~$H$ (recall that $H$ is an upper bound on the buyer's valuation) with probability $\nicefrac{s}{H}$ and pays $0$ otherwise.
The challenge is that $s$ should remain hidden even if trade takes place, revealing only the realized payment.\footnote{Replacing a payment with an expectation-equivalent distribution supported on one low value and one high value was utilized by \cite{EilatEM2023} for privacy and by \cite{RubinsteinZ2021} to save on communication (rather than for privacy).} The buyer should be convinced not only that there really is a commitment to the price, but also that the payment is the result of a faithful draw.

Our commitment scheme for this example is again the same, and like in Illustrative Examples 1 and 2, no proof that the mechanism is IC (or IR) is needed. Let $v$ be the buyer's reported value. If $v<s$, then the seller proves so (i.e., proves that $v+1\le s$) as in Illustrative Example 1. If, on the other hand, $v\ge s$, then the seller first proves that this holds (as in the multiple-buyer case of Illustrative Example~1). It remains to show how, given commitments to the bits of $s$, a coin with heads probability $\nicefrac{s}{H}$ can be verifiably flipped without the buyer learning anything about $s$ except for the realized outcome.

The seller draws a uniformly random number $x=x_1\ldots x_{\log_2H}\in \{0,\ldots,H\!-\!1\}$ and for each $i\in\{1,\ldots,\log_2H\}$, sends a commitment $R_i$ to $x_i$ and a commitment $R_i'$ to $1-x_i$, and proves to the buyer that she knows the log base $g$ of one of $R_i$ and $R_i'$ and the log base $h$ of one of $R_i$ and $R_i'$, without revealing any additional information.\footnote{Since it is assumed that the seller cannot know the log of any commitment in both bases, this means that the seller knows the log base $g$ of one of the commitments and the log base $h$ of the other, i.e., they are indeed commitments to some $x_i$ and its complement $1-x_i$.} The buyer then draws a uniformly random number $y=y_1\ldots y_{\log_2H}\in\{0,\ldots,H\!-\!1\}$ and sends it to the seller. For each $i\in\{1,\ldots,\log_2H\}$, the seller calculates $z_i=x_i+y_i\bmod2$. Note that since $H$ is a power of $2$, as long as either the seller or the buyer (or both) indeed draw their number uniformly at random, the number $z=z_1\ldots z_{\log_2H}$ is distributed uniformly in $\{0,\ldots,H\!-\!1\}$. Therefore, having the buyer pay $H$ if $z<s$, and $0$ otherwise, results in the required distribution. It remains for the seller to prove to the buyer whether or not $z<s$. Note that a commitment to the bits of $z$ is in fact commonly known: for each $i$, the commitment to $z_i$ is $R_i$ if $y_i=0$ and $R_i'$ if $y_i=1$. Since commitments to both $z$ and $s$ are known, the seller can use the technique from Illustrative Example 3 to prove to the buyer that $z<s$ (or alternatively, that $z\ge s$) without revealing any additional information.

We note that, similarly to before, this method is completely general for any way of describing (single- or multi-buyer) randomized mechanisms using bits: it not only allows the calculation of any function of both public bits and committed bits as in Illustrative Example 3, but furthermore allows functions that also depend on random bits in arbitrary ways.\footnote{With multiple buyers, each buyer $j$ draws a uniformly random number $y^j=y^j_1\ldots y^j_{\log_2H}\in\{0,\ldots,H\!-\!1\}$. These numbers are then bitwise-summed modulo 2 to obtain a number $y=y_1\ldots y_{\log_2H}\in\{0,\ldots,H\!-\!1\}$ that is used to shuffle the seller's commitment to the bits of the random number $x$ as in the example above. As long as at least one agent, buyer or seller, draws her number uniformly at random, then the resulting shuffled commitment is to the bits of a uniformly random number.} For example, no additional techniques are needed to implement the mechanism described in Illustrative Example 3 so that the buyer only sees the realized outcome, e.g., if the buyer gets the item, then she sees her realized payment, but never knows whether it equals $s^1$ (and she got the item due to a lucky draw) or it equals $s^1+s^2$ (and she had probability $1$ of getting the item). Having intuitively seen the properties that we guarantee, we now turn to develop the formal apparatus needed to state them precisely.

\section{Preliminaries}\label{prelims}

\subsection{Standard Definitions}\label{prelims-standard}

\paragraph{Players, types, outcomes, and utilities} There are finitely many \emph{players} $i=1,\ldots,n$. Each player $i$ has a private \emph{type} $t_i$ from a finite set $T_i$ of possible types.
There is a finite set $X$ of possible (pure) \emph{outcomes} (these include payments, if applicable). Each player $i$ has a \emph{utility function} $u_i: T_i\times X\rightarrow\mathbb{R}$, where $u_i(t_i,x)$ is the utility of player $i$ with (private) type $t_i$ from outcome $x$. We denote $T=T_1\times\cdots\times T_n$.

\begin{example}\label{nm-setting}
In an $n$-bidder $m$-item auction setting with additive bidder valuations and a commonly known upper bound $H$ on any bidder's value for any single item, we can take the outcome set to be $X = \bigl\{y\in\{0,1\}^{n\times m}~\big|~\forall j=1,\ldots,m:\sum_{i=1}^n y_{ij}\le 1\bigr\}\times \{0,\ldots,mH\}^n$.
Here, an outcome $x\in X$ is interpreted as a binary allocation matrix $y_{ij}$ (of dimensions $n\times m$) and a payment vector $p_1, \ldots, p_n$. If $y_{ij}=1$ for some $i=1,\ldots,n$ and $j=1,\ldots, m$, then bidder $i$ gets item $j$. Each bidder $i$ pays $p_i$. By construction, no item may be allocated more than once, but some items may remain unallocated. Taking each type set to be $T_i=\{0,\ldots,H\}^m$, indicating a value for each item, the utility function of each player $i$ can be taken as $i$'s value of items allocated to $i$ less $i$'s payment.
\end{example}

\paragraph{Mechanisms and incentives} A \emph{(direct-revelation) mechanism} $M$ is a function $M:T\rightarrow\Delta(X)$. A mechanism $M$ is \emph{Individually Rational (IR)} if $\Exp[u_i(t_i, M(t))]\ge0$ for every $i=1,\ldots,n$ and $t\in T$ (the expectation is over the randomness of the mechanism). A mechanism $M$ is \emph{Dominant Strategy Incentive Compatible (DSIC)} if $\Exp[u_i(t_i, M(t))]\ge\Exp[u_i(t_i, M(t_i', t_{-i}))]$ for every $i=1,\ldots,n$, $t\in T$, and $t_i'\in T_i$, where $(t_i', t_{-i})$ denotes the type profile derived from $t$ by replacing player $i$'s type to be $t_i'$ instead of $t_i$.

It will be convenient to view a randomized mechanism $M:T\rightarrow\Delta(X)$ as
a \emph{deterministic} function $M:T \times \{0,1\}^{\runrandlen} \rightarrow X$ where the second input to $M$ is a random sequence of $\runrandlen$ \emph{run bits} drawn uniformly from $\{0,1\}^{\runrandlen}$ for a suitable $\runrandlen$.\footnote{In practice, any algorithmic implementation of a randomized algorithm always uses only finitely many bits of randomness.}\textsuperscript{,}\footnote{\label{re-permutation}There are many different ways to map sequences of $\runrandlen$ random bits to outcomes in the support of the outcome distribution of the mechanism. Indeed, taking a function $M:T\times\{0,1\}^{\runrandlen}$ and applying to its second input any bijection of $\{0,1\}^{\runrandlen}$ onto itself results in a different function $M':T\times\{0,1\}^{\runrandlen}$ that nonetheless represents the same randomized mechanism (when viewed as a function $M:T\rightarrow\Delta(X)$). This fact is important since we will want knowledge of $t\in T$, of $\runrand\in\{0,1\}^{\runrandlen}$, and of the outcome $x=M(t,\runrand)$ to not leak any information about the \emph{distribution} $F=M(t)\in\Delta(X)$ beyond the fact that $x$ is in the support of $F$. We return to this point later.} Hereinafter we always describe a randomized mechanism in this way.

\begin{example}
In a setting with two bidders and one item, with $X=\{0,1\}^2\times\{0,\ldots,H\}^2$ and $T_1=T_2=\{0,\ldots,H\}$ as in \cref{nm-setting}, a second-price auction with no reserve price and random tie-breaking is the following randomized mechanism:
\[
M(t_1,t_2) =
\begin{cases}
(1,0,t_2,0) & t_1>t_2, \\
(0,1,0,t_1) & t_2>t_1, \\
\text{each of the above with probability $0.5$} & t_1=t_2.
\end{cases}
\]
This mechanism, when viewed as a deterministic function taking a random sequence of run bits as an additional input, can be written as follows (it suffices to take $\runrandlen=1$):\pagebreak[0]
\[
M(t_1,t_2,\runrand) =
\begin{cases}
(1,0,t_2,0) & t_1>t_2~\text{or both $t_1=t_2$ and (the first bit of) $\runrand$ is $0$},\\
(0,1,0,t_1) & \text{otherwise}.\\
\end{cases}
\]
\end{example}

\subsection{Languages for Mechanisms and Proofs}

\paragraph{Mechanism-description languages and mechanism interpreters} To formally reason about how mechanism designers describe mechanisms to players, one needs to choose a \emph{mechanism-description language}, i.e., a way to unambiguously represent mechanisms. Depending on the richness of the family of mechanisms the designer might wish to be able to describe, the mechanism-description language might be more or less complex. It is convenient to focus on a mechanism-description language that represents every mechanism of interest as a finite sequence of bits.

\begin{example}\label{languages}
There is a wide range of possible mechanism-description languages, e.g.:
\begin{itemize}
\item 
The Python programming language is a very general mechanism-description language. In this mechanism-description language, a specific mechanism is represented by the code of a Python function that takes the player types and run bits as inputs, and outputs the outcome.\footnote{The same mechanism can be described in more than one way in this language. Our analysis does not in any way require that each mechanism have a unique description.}
\item
If one is interested in describing only mechanisms of a specific functional form, then the mechanism-description language can be more specific. For example, a language used to describe second-price auctions with a reserve price can represent a specific such auction using the reserve price (written out in base $2$).
\item More generally, a mechanism-description language for a parametrized family of mechanisms can describe each mechanism using its parametrization.
E.g., we might describe any mechanism of the form from Illustrative Example~3 in~\cref{example-ic} using the two prices $s^1$ and $s^2$ (each written out in base $2$ using the same number of bits).
\end{itemize}
\end{example}

We formalize the notion of a mechanism-description language by specifying how the language is interpreted. A \emph{(mechanism) interpreter} (for a specific mechanism-description language) is a function $I:\{0,1\}^*\times T\times \{0,1\}^*\times\mathbb{N}\rightarrow X\cup\{\text{``error''}\}$, where we adopt the computer-science notation $\{0,1\}^*=\bigcup\bigl\{\{0,1\}^n~\big|~n\in\mathbb{N}\bigr\}$ denoting the set of all finite sequences of bits. The four inputs of a mechanism interpreter~$I$ are: (1) a \emph{mechanism description} $A\in \{0,1\}^*$ (in the language corresponding to the interpreter), (2) a type profile $t\in T$, (3) a sequence of run bits $\runrand\in\{0,1\}^*$ (treated as random bits as described in \cref{prelims-standard}), and (4) a maximum \emph{running time} (number of computation steps) $R\in\mathbb{N}$. We explicitly assume that given such $A$, $t$, and $\runrand$, there is an abstract, objective way to measure the number of basic computational steps (the ``running time'') required to run the mechanism according to the description $A$ over the type profile $t$ with run bits $\runrand$.\footnote{This high-level assumption suffices for our purposes and saves us from defining formal notions from the theory of programming languages, such as ``execution models.''}
The output $I(A,t,\runrand,R)$ of an interpreter is the outcome of running the mechanism according to the description $A$ over type profile $t$ with run bits $\runrand$, as long as this takes no more than $R$ running time---or the indication of an error if either $A$ is not a valid mechanism description, not enough run bits are specified, or the maximum running time is exceeded.

\begin{example}\label{interpreters}
The mechanism-description languages given in \cref{languages} can be formalized using the following interpreters:
\begin{itemize}
    \item An interpreter for the Python-based mechanism-description language from \cref{languages} indicates an error if either $A$ is not a valid Python function code, it crashes, it does not complete running within $R$ basic Python computation steps, or not enough run bits for $A$ are given in $\runrand$. Otherwise, it outputs the outcome.
    \item An interpreter for the mechanism-description language for second-price auctions with a reserve price from \cref{languages} ignores the run bits, outputs the outcome of the auction, and never indicates an error.
    \item An interpreter for the third mechanism-description language from \cref{languages} uses a single run bit for randomness, i.e., if $s^1\le \nicefrac{t}{2}<s^2$, then the outcome is $(0,s^1)$ if the run bit is $0$, and $(1,s^1)$ if the run bit is $1$. This interpreter indicates an error if $A$ does not consist of an even number of bits or the sequence of run bits $\runrand$ has length $0$.
\end{itemize}
\end{example}
For the remainder of this paper, we fix some specific mechanism interpreter~$I$ (and associated mechanism-description language). For concreteness, one may consider any interpreter from \cref{interpreters}. Given a mechanism description $A$ and $\runrandlen,R\in\mathbb{N}$, if for all type profiles $t\in T$ and run bits $\runrand\in\{0,1\}^{\runrandlen}$ it holds that $I(A,t,\runrand,R)\ne\text{``error''}$ (i.e., if $A$ is a valid mechanism description that always uses at most $\runrandlen$ run bits and runs in time at most $R$), then we say that $A$ is \emph{$(\runrandlen,R)$-runnable}, and denote by $M_A$ the mechanism described by $A$ (as interpreted by~$I$), i.e., $M_A(t,\runrand)=I(A,t,\runrand,R)$ (where $\runrand\in\{0,1\}^{\runrandlen}$).

\paragraph{Formal proof language} We assume a common language in which we can unambiguously state and prove mathematical claims. In particular, we assume the ability to formally state and prove claims of the forms ``$A$ is an $(\runrandlen,R)$-runnable description of an IR and DSIC mechanism'' and ``$A$, when run on type profile $t$ using run bits $\runrand$, has outcome $x$.'' We assume that the length of a proof for the latter statement is proportional to the running time of $A$, and that verifying a proof can be done in running time proportional to the sum of the lengths of the statement and the proof.

For concreteness, one could use predicate logic \citep[see, e.g.,][]{GonczarowskiN2022} or the Lean programming language (\url{https://lean-lang.org/}) as the formal language for stating and proving claims (both of these satisfy the above assumptions with respect to any real-life running times). For ease of exposition, we will not explicitly construct statements and proofs in either of these languages; however, all claims and proofs we require in this paper can be written in them. Moreover, any claim or proof from any textbook can be written in these languages, and its length will be proportional to its length in the textbook.
We will not need any further details for the purposes of this paper.
For an example of a formal proof of the incentive properties of a mechanism, see, e.g., \cite{Barthe16}.

\section{A Framework for Mechanism Hiding}\label{framework}

In this section, we formally develop a framework for mechanism hiding without a mediator. In \cref{protocols}, we define a general parametrized class of structured, sequential messaging environments between a mechanism designer and players, which we call \emph{direct-revelation commit-and-run protocols}. In \cref{desiderata}, we formulate our desiderata for direct-revelation commit-and-run protocols, which capture them having the same desirable trust and secrecy properties as running a mechanism through a trustworthy mediator, despite the absence of any mediator (trusted or otherwise).

\subsection{Commit-and-Run Protocols}\label{protocols}

In this section, we define commit-and-run protocols with a focus on the mechanics of running them; we postpone the discussion of the semantic meaning of the various parts of these protocols to \cref{desiderata}, in which we formalize the desiderata for our protocols.

\paragraph{Specifications}
Each of our protocols is parametrized by several bounds on the mechanism descriptions that it is able to run (where every combination of bounds is supported by one of our protocols): (1)~an upper bound $L_a\in\mathbb{N}$ on description lengths, (2)~an upper bound $\runrandlen\in\mathbb{N}$ on quantities of (uniformly drawn) run bits used, (3)~an upper bound $R\in\mathbb{N}$ on running times, and (4)~an upper bound $L_p\in\mathbb{N}$ on lengths of proofs that a description is an $(\runrandlen,R)$-runnable description of an IR and DSIC mechanism. As discussed in the introduction, each of our protocols is also parametrized by an upper bound $B\in\mathbb{N}$ on the computational power of an \emph{adversary} (mechanism designer who attempts to break her commitment, or players attempting to discover a hidden mechanism) and by an upper bound $0<\varepsilon<1$ on the attack-success probability of any adversary. As for the units of~$B$, we formally measure computational power in terms of available running time: The higher an agent's computational power, the greater the number of computation steps she can run, i.e., the greater the running time at her disposal. The precise meaning of these bounds is defined in \cref{desiderata}.

Overall, we call a tuple $\sigma=(L_a,\runrandlen,R,L_p,B,\varepsilon)$ of such parameters, where $R\ge L_a$, a \emph{specification}. Let $\Sigma$ be the set of all specifications such that at least one $(\runrandlen,R)$-runnable description, of length at most $L_a$, of an IR and DSIC mechanism exists.
We prove in later sections that a protocol satisfying our desiderata exists for every specification $\sigma\in\Sigma$.

\paragraph{Protocols}
A (direct-revelation commit-and-run) protocol consists of two parts: the grammar rules and the playbook. The grammar rules define the extensive-form interaction, including the order in which messages are sent, the length of each message, and the publicly verifiable required relation of messages to each other. The playbook defines a ``recommended behavior'' for the mechanism designer.\cryptofootnote{In presenting a protocol using these two components, we deviate from the standard computer-science view of protocols in favor of greater usability for economists. Indeed, computer science usually views a protocol as a set of instructions for each agent, where the instructions contain both what the agent is expected to send at any given moment (what we call the playbook) and which test to run on anything received (which implicitly defines what we call the grammar rules).} We now define each of these.

\begin{table}[p]
\begin{center}
{\footnotesize
\begin{tabular}{lll}
\textbf{Notation} & \textbf{Meaning} \\
\hline
\underline{Setting} \\
$n$ & number of players \\
$X$ & possible outcomes \\
$T$ & possible type profiles \\
$M:T\rightarrow\Delta(X)$ \quad / \quad $M:T\times\smash{\underbrace{{\{0,1\}^{\runrandlen}}}_{\mathrm{uniform}}}\rightarrow X$ & direct-revelation mechanism\\ $\vphantom{\underbrace{a}{a}}$ & (intrinsic/externally given randomness)\\
\hline
\underline{Mechanisms and Proofs} \\
$A$ & mechanism description \\
$\mathcal{A}_{L_a,\runrandlen,R}$ & $(\runrandlen,R)$-runnable mechanism descriptions of length $\le L_a$ \\
$M_A$ & mechanism defined by description $A$ \\
$P$ & proof \\
$\mathcal{P}_{L_p}$ & proofs of length at most $L_p$ \\
\hline
\underline{Specifications} \\
$L_a$ & upper bound on mechanism description length \\
$\runrandlen$ & upper bound on number of run bits used \\
$R$ & upper bound on mechanism running time \\
$L_p$ & upper bound on length of proof that a description is \\
& $(\runrandlen,R)$-runnable and of an IR and DSIC mechanism \\
$B$ & upper bound on adversary running time \\
$\varepsilon$ & upper bound on adversary success probability \\
$\sigma=(L_a,\runrandlen,R,L_p,B,\varepsilon)\in\Sigma$ & specification \\
\hline
\underline{Protocol Grammar Rules} \\
$\commitrandlen$ & number of (common) commitment random bits \\
$\commitmsglen$ & length in bits of commitment message \\
$\commitverify:\{0,1\}^{\commitrandlen}\times\{0,1\}^{\commitmsglen}\rightarrow\{\True,\False\}$ & commitment verifier \\
$\runrandlen$ & number of (common) drawn run bits \\
$\runmsglen$ & length in bits of run message \\
$\runverify:\{0,1\}^{\commitrandlen}\times\{0,1\}^{\commitmsglen}\times T\times \{0,1\}^{\runrandlen} $ & run verifier \\
\qquad $\phantom{a}\times \bigl(X\times \{0,1\}^{\runmsglen}\bigr)\rightarrow\{\True,\False\}$ \\
\hline
\underline{Running a Protocol} \\
$\commitrand\in\{0,1\}^{\commitrandlen}$ & (common) commitment random bits \\
$\commitmsg\in\{0,1\}^{\commitmsglen}$ & commitment message (mnemonic for $D$eclaration) \\
$\commitstrat:\{0,1\}^{\commitrandlen}\rightarrow\Delta\bigl(\{0,1\}^{\commitmsglen}\times\{0,1\}^*\bigr)$ & mechanism designer commitment strategy \\
$s\in\{0,1\}^*$ & mechanism-designer private state \\
$\runrand\in\{0,1\}^{\runrandlen}$ & (common) run bits (mechanism need not use all) \\
$\runmsg\in\{0,1\}^{\runmsglen}$ & run message \\
$\runstrat:\{0,1\}^{\commitrandlen}\times\bigl(\{0,1\}^{\commitmsglen}\times\{0,1\}^*\bigr)$ & mechanism-designer run strategy\\
\qquad $\phantom{a}\times T\times \{0,1\}^{\runrandlen}
\rightarrow\Delta\bigl(X\times\{0,1\}^{\runmsglen}\bigr)$ \\
$(\commitstrat,\runstrat)$ & mechanism-designer strategy \\
\hline
\underline{Playbooks} \\
$\protcommitstrat:\mathcal{A}_{L_a,\runrandlen,R}\times \mathcal{P}_{L_p}\times\{0,1\}^{\commitrandlen}$ & commitment playbook (recommended commitment \\
\qquad $\rightarrow\Delta\bigl(\{0,1\}^{\commitmsglen}\times\{0,1\}^*\bigr)$
& strategy for well-behaved mechanism designer) \\
$\protrunstrat:\mathcal{A}_{L_a,\runrandlen,R}\times\{0,1\}^{\commitrandlen}\times\bigl(\{0,1\}^{\commitmsglen}\times\{0,1\}^*\bigr)$& run playbook (recommended run strategy \\
\qquad $\phantom{a}\times T\times \{0,1\}^{\runrandlen}
\rightarrow\Delta\bigl(X\times\{0,1\}^{\runmsglen}\bigr)$ & for well-behaved mechanism designer)\\
\hline
\underline{Protocols} \\
$\bigl((\commitrandlen,\commitmsglen,\commitverify,\runrandlen,\runmsglen,\runverify), (\protcommitstrat,\protrunstrat)\bigr)$ & protocol (grammar rules and playbook) \\
$\bigl((\commitrandlen^\sigma,\commitmsglen^\sigma,\commitverify^\sigma,\runrandlen^\sigma,\runmsglen^\sigma,\runverify^\sigma), (\protcommitstrat^\sigma,\protrunstrat^\sigma)\bigr)_{\sigma\in\Sigma}$ & protocol catalog (protocol for each specification)
\end{tabular}}
\end{center}
\caption{\label{notation}Notation for direct-revelation commit-and-run protocols.}
\end{table}

\paragraph{Grammar rules} A protocol with \emph{grammar rules} $(\commitrandlen,\commitmsglen,\commitverify,\runrandlen,\runmsglen,\runverify)$ (defined below; see also \cref{notation}) proceeds as follows:
\begin{enumerate}
    \item[0.] Commitment randomness generation step: nature draws a uniformly random sequence $\commitrand\in\{0,1\}^{\commitrandlen}$ (the number of bits $\commitrandlen$ is part of the grammar rules) and publicly announces it.\footnote{There are various well-documented methods to obtain random (or ``sufficiently random'') bits even if the different agents are far from one another, from basing these bits on publicly available data that are not manipulable by any of the agents (as was done in practice in the ``numbers racket'' operated by organized crime in New York City in the 1960s, \citealp[p.~68]{Cook14}, and in a more sophisticated way to create the randomness for the first Bitcoin block, \citealp{Davis2011}), through observing rapidly changing physical phenomena (e.g., the NIST Randomness Beacon, \url{https://beacon.nist.gov/home}), to cryptographic coin flipping \citep{Blum81,GMW87}. In this paper, we abstract away from the precise method.}
    \item[1a.] Commitment step: the mechanism designer, given $\commitrand$, chooses a commitment message $\commitmsg\in\{0,1\}^{\commitmsglen}$ (the number of bits $\commitmsglen$ is part of the grammar rules) and publicly announces it.
    \item[1b.] Commitment verification step: each player can evaluate $\commitverify(\commitrand,\commitmsg)\in\{\True,\False\}$ (the predicate $\commitverify$ is part of the grammar rules) to verify that it holds (i.e., returns $\True$).\footnote{As we will see, our desiderata in \cref{desiderata} mandate that if $\commitverify(\commitrand,\commitmsg)$ holds then the players are convinced that the mechanism designer has committed to a description $A$ of an IR and DSIC mechanism, despite knowing nothing more about the mechanism than if it had been run via a trustworthy mediator.}
    \item[2.] Direct revelation step: each player publicly reveals her type. Let $t\in T$ be the profile of revealed types.
    \item[3.] Run randomness generation step: nature draws a uniformly random sequence $\runrand\in\{0,1\}^{\runrandlen}$ (the possibly-zero number $\runrandlen$ of bits is part of the grammar rules) and publicly announces it.
    \item[4a.] Run step: the mechanism designer, given $t$ and $\runrand$, publicly announces an outcome $x\in X$ as the outcome of the mechanism, and also chooses a run message $\runmsg\in\{0,1\}^{\runmsglen}$ (the number of bits $\runmsglen$ is part of the grammar rules) and publicly announces it.
    \item[4b.]
    \begin{sloppypar}
    Run verification step: each player can evaluate $\runverify\bigl(\commitrand,\commitmsg,t,\runrand,(x,\runmsg)\bigr)\in\{\True,\False\}$ (the predicate $\runverify$ is part of the grammar rules) to verify that it holds.\footnote{As we will see, our desiderata in \cref{desiderata} mandate that if $\runverify\bigl(\commitrand,\commitmsg,t,\runrand,(x,\runmsg)\bigr)$ holds then the players are convinced that $x=M_A(t,\runrand)$ where $A$ is the mechanism fixed in step 1a (as verified in step 1b), despite knowing nothing more about the mechanism than if it had been run via a trustworthy mediator.}
    \end{sloppypar}
\end{enumerate}

It will be convenient to think of a mechanism designer's strategy as a pair of strategies: one for the commitment step and one for the run step. A (mixed) \emph{commitment strategy} is a randomized function $\commitstrat:\{0,1\}^{\commitrandlen}\rightarrow\Delta\bigl(\{0,1\}^{\commitmsglen}\times\{0,1\}^*\bigr)$ from the public bits $\commitrand$ from step 0 to (a joint distribution over) a commitment message $\commitmsg$ and a \emph{mechanism-designer private state}~$s$. A (mixed) \emph{run strategy} is a randomized function $\runstrat:\{0,1\}^{\commitrandlen}\times\bigl(\{0,1\}^{\commitmsglen}\times\{0,1\}^*\bigr)\times T\times\{0,1\}^{\runrandlen}\rightarrow\Delta\bigl(X\times \{0,1\}^{\runmsglen}\bigr)$ from the public bits $\commitrand$ from step 0, the commitment message $\commitmsg$ from step 1a and the mechanism-designer private state $s$ returned by the commitment strategy, the type profile $t$ from step 2, and the run bits $\runrand$ from step 3, to (a joint distribution over) an outcome $x$ and a run message $\runmsg$. A \emph{mechanism-designer strategy} is a pair $(\commitstrat,\runstrat)$ of commitment and run strategies. The mechanism-designer private state allows the random choices of the commitment strategy and run strategy to be correlated.

\paragraph{Playbooks}
Given a mechanism description, a playbook specifies recommended instructions (the \emph{commitment playbook}) for constructing the commitment message and recommended instructions (the \emph{run playbook}) for constructing the run message. We now define these formally. Let $(L_a,\runrandlen,R,L_p,B,\varepsilon)\in\Sigma$ be a specification. We write
$\mathcal{A}_{L_a,\runrandlen,R}$ for the set of all $(\runrandlen,R)$-runnable mechanism descriptions of length at most $L_a$. We write $\mathcal{P}_{L_p}$ for the set of all proofs of length at most~$L_p$.

A \emph{commitment playbook} is a function $\protcommitstrat:\mathcal{A}_{L_a,\runrandlen,R}\times \mathcal{P}_{L_p}\times\{0,1\}^{\commitrandlen}\rightarrow\Delta\bigl(\{0,1\}^{\commitmsglen}\times\{0,1\}^*\bigr)$. Given a mechanism description $A$ and a (valid) proof $P$ that $A$ is an $(\runrandlen,R)$-runnable description of an IR and DSIC mechanism, we think of such a playbook as recommending that a mechanism designer who wishes to run~$A$ use the commitment strategy $\protcommitstrat(A,P,\cdot)$ obtained from $\protcommitstrat$ by fixing the first two parameters of the playbook to be $A$ and $P$. In other words, given $A$, $P$, and the public commitment random bits $\commitrand$, the playbook recommends that the mechanism designer send a commitment message and store a private state jointly distributed according to $\protcommitstrat(A,P,\commitrand)$.\footnote{As we will see, one of our desiderata in \cref{desiderata} is that if a mechanism designer does not effectively follow this recommendation for some $A$ and appropriate $P$, then the predicate $\commitverify$ will evaluate to $\False$, and so this will become known to the players.}

A \emph{run playbook} is a function $\protrunstrat:\mathcal{A}_{L_a,\runrandlen,R}\times\{0,1\}^{\commitrandlen}\times\bigl(\{0,1\}^{\commitmsglen}\times\{0,1\}^*\bigr)\times T\times \{0,1\}^{\runrandlen}\rightarrow\Delta\bigl(X\times \{0,1\}^{\runmsglen}\bigr)$. Given a mechanism description $A$, we think of such a playbook as recommending that a mechanism designer who wishes to run~$A$ use the run strategy $\protrunstrat(A,\cdot,\cdot,\cdot,\cdot)$ obtained from $\protrunstrat$ by fixing its first parameter to be $A$. In other words, given $A$, the mechanism-designer private state $s$,
and all information $\commitrand,\commitmsg,t,\runrand$ publicly observed up to the run step, the playbook recommends that the mechanism designer send an outcome and a run message jointly distributed according to $\protrunstrat\bigl(A,\commitrand,(\commitmsg,s),t,\runrand\bigr)$.\footnote{Similarly to the commitment playbook, as we will see, one of our desiderata below is that if a mechanism designer does not effectively follow this recommendation for the same $A$ that she used in the commitment step, then the predicate $\runverify$ will evaluate to $\False$, and so this will become known to the players. In our run playbook, the outcome to be sent will of course be $M_A(t,\runrand)$; we define the run playbook as we do because it is convenient for it to specify an entire run strategy rather than only the message part of it.}

Finally, a \emph{playbook} is a pair $(\protcommitstrat,\protrunstrat)$ of commitment and run playbooks.\cryptofootnote{An object that recommends cryptographic messages is sometimes called a \emph{compiler} in the cryptographic literature.} We call a pair $\bigl((\commitrandlen,\commitmsglen,\commitverify,\runrandlen,\runmsglen,\runverify),(\protcommitstrat,\protrunstrat)\bigr)$ of grammar rules and corresponding playbook a \emph{(direct-revelation commit-and-run) protocol}.\footnote{As we will see, one of our desiderata in \cref{desiderata} is that if the mechanism designer follows the playbook with mechanism description $A$ and appropriate $P$, then $A$ is run and $\commitverify$ and $\runverify$ evaluate to $\True$.}

\subsection{Desiderata}\label{desiderata}

In this section, we define our desiderata for a (direct-revelation commit-and-run) protocol to provide the same desirable trust and secrecy guarantees as running a mechanism via a trustworthy mediator. We have four such desiderata: we require that the protocol be \emph{implementing}, \emph{committing}, \emph{hiding}, and \emph{computationally unobtrusive}.

\paragraph{Implementing} We start by defining what it means for a protocol to be ``implementing.'' In a nutshell, if the mechanism designer follows the playbook, then the outcome of the protocol should be the outcome of running the mechanism designer's mechanism description with the players' type reports, and all verifications should pass.

\begin{definition}[Implementing Protocol\cryptofootnote{Unlike the standard cryptographic notion of ``completeness,'' our ``implementing'' desideratum is not probabilistic.}]\label{implementing}
A protocol $\bigl((\commitrandlen,\commitmsglen,\commitverify,\runrandlen,\runmsglen,\runverify),(\protcommitstrat,\protrunstrat)\bigr)$
is \emph{implementing} for a specification $(L_a,\runrandlen,R,L_p,B,\varepsilon)\in\Sigma$ if for every $A\in\mathcal{A}_{L_a,\runrandlen,R}$ and proof $P\in\mathcal{P}_{L_p}$ that $A$ is an $(\runrandlen,R)$-runnable description of an IR and DSIC mechanism, and for every type profile $t\in T$ and run bits $\runrand\in\{0,1\}^{\runrandlen}$, the following holds (for the grammar rules $(\commitrandlen,\commitmsglen,\commitverify,\runrandlen,\runmsglen,\runverify)$). If the mechanism designer plays the strategy $\bigl(\protcommitstrat(A,P,\cdot),\protrunstrat(A,\cdot,\cdot,\cdot,\cdot)\bigr)$ and the players play~$t$, then both of the following hold:
    \begin{itemize}
        \item Verifications pass, i.e., $\Pr_{\commitrand\sim U(\{0,1\}^{\commitrandlen}),\protcommitstrat,\protrunstrat}\bigl[\commitverify\bigl(\commitrand,\commitmsg\bigr)\ \&\ \runverify\bigl(\commitrand,\commitmsg,t,\runrand,(x,\runmsg)\bigr)\bigr]=1$ for $(\commitmsg,s)\sim\protcommitstrat(A,P,\commitrand)$ and $(x,\runmsg)\sim \protrunstrat\bigl(A,\commitrand,(\commitmsg,s),t,\runrand\bigr)$.
        \item The mechanism that is run is $M_A$, i.e., $\Pr_{\commitrand\sim U(\{0,1\}^{\commitrandlen}),\protcommitstrat,\protrunstrat}\bigl[x=M_A(t,\runrand)\bigr]=1$ for $x$ distributed as just defined (and where if $M_A$ requires less than $\runrandlen$ random bits, then it only uses an appropriate prefix of $\runrand$).
    \end{itemize}
\end{definition}

\paragraph{Committing} Next, we define what it means for a protocol to be ``committing.'' In a nutshell, whether or not the mechanism designer follows the playbook, either her commitment message should give rise to an IR and DSIC mechanism whose output will be the outcome,\footnote{Recall from the introduction that for simplicity, we focus on IR and DSIC as the properties of interest of the mechanism that are to be verified before it is run. Our framework can be made to support arbitrary properties of mechanisms (given appropriate proof $P$ for the playbook); see \cref{discussion} for a discussion.} or at least one of the verifications should fail with probability at least $1-\varepsilon$.

Notice that the timing of the interaction in a commit-and-run protocol allows the mechanism designer to choose the mechanism based on the commitment random bits~$\commitrand$ together with additional private randomness. This is harmless as long as this randomness is realized before---and independently of---the player types and run bits $\runrand$. (Indeed, a distribution over IR and DSIC mechanisms is itself an IR and DSIC mechanism.) The following definition says that if all verifications pass, then the mechanism is effectively chosen by a randomized function $\alpha$ taking as input only the commitment bits $\commitrand$ (and hence not depending on anything else that the mechanism designer learns during the run of the protocol). The only exception might possibly be with negligibly small (at most $\varepsilon$) probability over the draw of $\commitrand$ and the randomness of $\alpha$ (e.g., when the strategy contains or successfully guesses discrete logarithms for these specific~$\commitrand$).
Note that our definition needs to capture the possibility that the mechanism chosen by the function $\alpha$ is jointly distributed with the action prescribed by the commitment strategy. Thus, formally, there should be a \emph{coupling} of (the randomness of) the commitment strategy and (the randomness of) the function $\alpha$ so that with probability $1\!-\!\varepsilon$ their outputs align.

\begin{definition}[Committing Protocol\cryptofootnote{One may contrast our ``committing'' desideratum with the standard cryptographic notion of ``efficient extraction,'' which, adapted to our setting, would have required, roughly speaking, the ability to approximately sample in a computationally feasible manner from pairs of random bits and mechanism descriptions such that verifications pass with high probability. (In particular, note that we do not require that $\alpha$, or $\Xi$, be feasibly computable.) From a cryptographic viewpoint, our proof in \cref{proofs} that the protocols that we construct satisfy our ``committing'' desideratum provides new motivation for a property of the cryptography underlying our construction that is non-standard in the cryptographic literature: having the real and simulated reference strings be drawn from the same distribution; see \cryptocref{cp-proof-sec} in the supplementary material for a discussion.}]\label{committing}
A protocol with grammar rules \linebreak $(\commitrandlen,\commitmsglen,\commitverify,\runrandlen,\runmsglen,\runverify)$ is \emph{committing} for a specification $(L_a,\runrandlen,R,L_p,B,\varepsilon)\in\Sigma$ if for every mechanism-designer strategy $(\commitstrat,\runstrat)$ computable in time at most $B$, there exist:
\begin{enumerate}
    \item a~mechanism-choice function $\alpha:\{0,1\}^{\commitrandlen}\to\Delta(\mathcal{A}_{L_a,\runrandlen,R})$, where for every $\commitrand\in\{0,1\}^{\commitrandlen}$, $\alpha(\commitrand)$ is supported solely on descriptions of IR and DSIC mechanisms, and
    \item a pointwise coupling $\Xi:\{0,1\}^{\commitrandlen}\to\Delta\bigl((\{0,1\}^{\commitmsglen}\times\{0,1\}^*)\times\mathcal{A}_{L_a,\runrandlen,R}\bigr)$ of $\commitstrat$ and $\alpha$,\footnote{This means that for every $\commitrand\in\{0,1\}^{\commitrandlen}$, the joint distribution $\Xi(\commitrand)$ is a coupling of $\commitstrat(\commitrand)$ and $\alpha(\commitrand)$; in other words, for every such $\commitrand$, the marginals of the joint distribution $\Xi(\commitrand)$ are $\commitstrat(\commitrand)$ and $\alpha(\commitrand)$.}
\end{enumerate}
such that the following holds. For every type profile $t\in T$ and run bits $\runrand\in\{0,1\}^{\runrandlen}$, with probability at least $1\!-\!\varepsilon$, if both commitment and run verifications pass then the outcome is that of $M_{A}$, where $A$ is the draw from $\alpha(\commitrand)$ under the coupling; that is, $\Pr_{\commitrand\sim U(\{0,1\}^{\commitrandlen}),\Xi,\runstrat}\bigl[\commitverify\bigl(\commitrand,\commitmsg\bigr)\ \&\ \runverify\bigl(\commitrand,\commitmsg,t,\runrand,(x,\runmsg)\bigr)\ \&\ x\ne M_A(t,\runrand)\bigr]\le \varepsilon$ for $\bigl((\commitmsg,s),A\bigr)\sim\Xi(\commitrand)$ and $(x,\runmsg)\sim \runstrat\bigl(\commitrand,(\commitmsg,s),t,\runrand\bigr)$.  (Recall that $(\commitmsg,s)$ here is distributed as if output by $\commitstrat(\commitrand)$, and $A$ here is distributed as if output by $\alpha(\commitrand)$.)
\end{definition}

\cref{committing} makes precise the contrast that we draw in the introduction: A traditional commitment to an observed mechanism via verification of publicly observed signals (the announced mechanism and outcome) is replaced with a \emph{self-policing} commitment to a \emph{never-observed mechanism}, yet still via verification of publicly observed signals (the messages).
The definition's guarantee is that whatever the mechanism-designer strategy is---even if examining its inner workings does not readily reveal any explicit mechanism---passing the verification of the public messages alone quite strikingly implies (with all but negligible probability) that some IR and DSIC mechanism is in fact implicitly chosen by the commitment strategy. This is the same guarantee as the declaration of a trustworthy mediator would provide, yet here it emerges from the messages themselves. The mechanism, whether explicit or implicit in the designer's strategy, is \emph{never observed} by anyone except the designer (not even by a trusted mediator)---it may well exist solely in her mind.

\paragraph{Hiding} The two protocol properties defined so far, implementing and committing, hold also in the traditional protocol (committing even holds there in a stronger, deterministic sense). We now define a new property---``hiding''---that is not present in the traditional protocol. In a nutshell: First, before player types are reported, players without exceedingly high computational power cannot distinguish between mechanisms better than if their prior, when conditioned on mechanism descriptions that are consistent with the announced properties (IR and DSIC, being $(\runrandlen,R)$-runnable, having a proof of length at most $L_p$ of this, and being of length at most $L_a$), were almost unchanged. Second, after the protocol concludes, players without exceedingly high computational power cannot distinguish between mechanisms better than if their prior, when conditioned on mechanism descriptions that are consistent with the announced properties until then (all of the above, and in addition what $M(t,\runrand)$ is), were almost unchanged.\footnote{While it might at first glance seem that making $\runrand$ public leaks too much information, this need not be the case. The mechanism designer can create the mechanism description by including, as part of the committed description, an arbitrary bijection of $\{0,1\}^{\runrandlen}$ onto itself to apply to $\runrand$ before calculating $M(t,\runrand)$. On the one hand, as discussed in \cref{re-permutation} on Page~\pageref{re-permutation}, this does not change the output distribution. On the other hand, with such a bijection applied, a given public value of $\runrand$ need not correspond to any particular structural part of the mechanism. Under a uniform relabeling of $\{0,1\}^{\runrandlen}$, observing $\runrand$ does not reveal more about the mechanism than if $\runrand$ were hidden.
}

\begin{definition}[Hiding Protocol\cryptofootnote{One may contrast our ``hiding'' desideratum with the standard, intricate cryptographic notion of secrecy through ``simulation.'' While cryptography many times shies away from non-simulation-based secrecy guarantees as in our ``hiding,'' the usual downfalls of non-simulation-based approaches are either circumvented here due to properties specific to the interactions in our protocols (e.g., the playbook not choosing the mechanism based on $\commitrand$) or are of minor interest in economics (e.g., a player participating in the mechanism only to save on computational power for some other, pathologically defined task). That said, all of our constructions in this paper also satisfy this stronger property, a full formulation of which for commit-and-hide protocols appeared in an earlier version of this paper. We omit this formulation here for brevity and to keep the exposition focused.}]\label{hiding}
A protocol $\bigl((\commitrandlen,\commitmsglen,\commitverify,\runrandlen,\runmsglen,\runverify),(\protcommitstrat,\protrunstrat)\bigr)$ is \emph{hiding} for a specification $(L_a,\runrandlen,R,L_p,B,\varepsilon)\in\Sigma$ if conditions~1 and~2 below hold for every distinguisher program~$\D$ computable in time at most~$B$, and every $A_1,A_2\in\mathcal{A}_{L_a,\runrandlen,R}$ and proofs $P_1,P_2\in\mathcal{P}_{L_p}$ such that each $P_i$ is a proof that $A_i$ is an $(\runrandlen,R)$-runnable description of an IR and DSIC mechanism.
\begin{enumerate}
    \item (Commitment step is hiding) $|p_1-p_2|\le\varepsilon$, where for every $i\in\{1,2\}$, $p_i$ is the probability---over the draw of commitment bits $\commitrand\sim U\bigl(\{0,1\}^{\commitrandlen}\bigr)$ and the randomness of $\protcommitstrat$---of the distinguisher $\D$ outputting ``This is $A_1$'' on input $(\commitrand,\commitmsg)$ corresponding to a run of the commitment phase (steps 0 through~1b) of the protocol where the mechanism designer plays the commitment playbook for $A_i,P_i$, i.e., $\protcommitstrat(A_i,P_i,\cdot)$.
    \item (Run step is hiding) For every type profile $t\in T$ and run bits $\runrand\in\{0,1\}^{\runrandlen}$, if~$M_{A_1}(t,\runrand)=M_{A_2}(t,\runrand)$ then $|q_1-q_2|\le\varepsilon$, where for every $i\in\{1,2\}$, $q_i$ is the probability---over the draw of commitment bits $\commitrand\sim U\bigl(\{0,1\}^{\commitrandlen}\bigr)$ and the randomness of $(\protcommitstrat,\protrunstrat)$---of the distinguisher $\D$ outputting ``This is $A_1$'' on input $(\commitrand,\commitmsg,t,\runrand,x,\runmsg)$ corresponding to a run of (all steps of) the protocol where the mechanism designer plays the playbook for $A_i,P_i$, i.e., $\bigl(\protcommitstrat(A_i,P_i,\cdot),\protrunstrat(A_i,\cdot,\cdot,\cdot,\cdot)\bigr)$.
\end{enumerate}
\end{definition}

\paragraph{Computationally unobtrusive}
Finally, we define what it means for a protocol to be ``computationally unobtrusive.'' In a nutshell, the computational requirements of ``well-behaved'' agents---those who participate in the protocol without malice, i.e., a mechanism designer who just runs the playbook and players who just run the verifiers---should be essentially the same as in the traditional protocol, allowing only an increase in these computational requirements that is smaller by many orders of magnitude than the computational power of adversaries against which the protocol protects.

As is common with computational requirements in computer science and statistics, our definition is asymptotic, and in our case, it is asymptotic in the parameters in $\sigma$. Specifically, we require that the rate of increase in the computational requirements of well-behaved agents, as we increase $B$ or decrease $\varepsilon$, be exceedingly modest, and otherwise the computational requirements should be asymptotically essentially the same as in the traditional protocol.

We formalize the first part (exceedingly modest increase as we increase $B$ or decrease $\varepsilon$) by requiring that the computational requirements of well-behaved agents be subpolynomial in $B$ and $\frac{1}{\varepsilon}$. A function is \emph{subpolynomial} if it is asymptotically smaller than (i.e., grows more slowly than) \emph{every} positive-degree power of its parameters.
E.g., $\log^7 x$ is subpolynomial in $x$, whereas $x^{1/3}$ and $x^{0.000001}$ are \emph{not} subpolynomial in $x$.\cryptofootnote{The subpolynomial dependence on $B$ and $\nicefrac{1}{\varepsilon}$ is a reformulation of the standard notion of cryptographic security, namely that the protocol runs in time polynomial in the security parameter, while making sure that an adversary whose computing power is \emph{any} polynomial in the security parameter succeeds with probability negligible in the security parameter. Our formulation starts from the adversary's computing power and attack-success probability (rather than from the security parameter) as the basic parameters, and hence a requirement that the running time in question be polynomial in the security parameter is equivalently restated as that running time being subpolynomial in these two basic parameters.}
We formalize the second part (asymptotically essentially the same as in the traditional protocol) by requiring that beyond the dependence on $B$ and $\varepsilon$, the computational requirements be asymptotically the same as in the traditional protocol, with at most additional factors that are logarithmic in parameters on which the computation in the traditional protocol depends linearly.

To formally define computational unobtrusiveness, we in fact define what it means for a \emph{family} of commit-and-run protocols to be computationally unobtrusive.\cryptofootnote{The standard approach in computer science is to define such a family implicitly using ``big-O notation,'' which with a large number of parameters might be challenging to navigate for readers who do not regularly encounter it (and even for those who do).} Here, by ``family of commit-and-run protocols'' we mean what we call a \emph{commit-and-run protocol catalog}, which is simply a protocol for each possible specification. We define this notion and extend the definitions of implementing, committing, and hiding onto it.

\begin{definition}[Commit-and-Run Protocol Catalog]
A \emph{commit-and-run protocol catalog} is a family of protocols
$
\bigl((\commitrandlen^\sigma,\commitmsglen^\sigma,\commitverify^\sigma,\runrandlen^\sigma,\runmsglen^\sigma,\runverify^\sigma), (\protcommitstrat^\sigma,\protrunstrat^\sigma)\bigr)_{\sigma\in\Sigma}
$
where for every specification $\sigma=(L_a,\runrandlen,R,L_p,B,\varepsilon)\in\Sigma$ we have that $\runrandlen^\sigma=\runrandlen$.\footnote{A short discussion on notation is in order here. One can think of a catalog as a machine whose ``control panel'' can be used to input a specification $\sigma$, and which outputs a protocol suitable for running mechanisms as specified by $\sigma$ with guarantees as specified by $\sigma$. Under this interpretation, $\runrandlen$, a component of $\sigma$, is a number typed into the control panel, which specifies how many run bits supported mechanisms may need. The value $\runrandlen^\sigma$, a component of the protocol corresponding to specification $\sigma$, is the number of random bits drawn in Step 3 of that protocol. (There is a semantic difference here: $\runrandlen$ specifies which mechanisms we should be able to support, while $\runrandlen^\sigma$ specifies how to mechanically run the protocol.)
The requirement $\runrandlen^\sigma=\runrandlen$ simply states that these two numbers should be one and the same: Drawing this number of random bits is both sufficient and necessary for running all mechanisms that we have to support.} Such a catalog is called:
\begin{itemize}
    \item 
\emph{implementing} if for every $\sigma\in\Sigma$, the protocol $\bigl((\commitrandlen^\sigma,\commitmsglen^\sigma,\commitverify^\sigma,\runrandlen^\sigma,\runmsglen^\sigma,\runverify^\sigma), (\protcommitstrat^\sigma,\protrunstrat^\sigma)\bigr)$ is implementing for~$\sigma$.
    \item 
\emph{committing} if for every $\sigma\in\Sigma$, the protocol $\bigl((\commitrandlen^\sigma,\commitmsglen^\sigma,\commitverify^\sigma,\runrandlen^\sigma,\runmsglen^\sigma,\runverify^\sigma), (\protcommitstrat^\sigma,\protrunstrat^\sigma)\bigr)$ is committing for $\sigma$.
    \item 
\emph{hiding} if for every $\sigma\in\Sigma$, the protocol $\bigl((\commitrandlen^\sigma,\commitmsglen^\sigma,\commitverify^\sigma,\runrandlen^\sigma,\runmsglen^\sigma,\runverify^\sigma), (\protcommitstrat^\sigma,\protrunstrat^\sigma)\bigr)$ is hiding for $\sigma$.
\end{itemize}

\end{definition}

Recall that in the traditional protocol, the total computational power required from the mechanism designer is linear in $L_a+L_p+R$: linear in $L_a+L_p$ in the commitment phase (sending the mechanism description $A$ and sending the proof $P$), and linear in $R$ in the run phase (running the mechanism). The computational power required from the players is also linear in $L_a+L_p+R$: linear in $L_a+L_p$ in the commitment phase (receiving the mechanism description and verifying the proof), and linear in $R$ in the run phase (verifying the mechanism outcome). As explained above, we seek asymptotically the same computational requirements from well-behaved agents in our protocol, up to subpolynomial factors in $B$ and $\frac{1}{\varepsilon}$ and logarithmic factors in $L_a$, $L_p$, and $R$.
 
\begin{definition}[Computationally Unobtrusive Protocol Catalog]\label{efficient}
A commit-and-run protocol catalog $\bigl((\commitrandlen^\sigma,\commitmsglen^\sigma,\commitverify^\sigma,\runrandlen^\sigma,\runmsglen^\sigma,\runverify^\sigma), (\protcommitstrat^\sigma,\protrunstrat^\sigma)\bigr)_{\sigma\in\Sigma}$
is \emph{computationally unobtrusive} if both of the following hold:
    \begin{itemize}
        \item
            The running time of (sampling from, i.e., computing an output distributed according to) the commitment playbook $\protcommitstrat^{\sigma}$ and the running time of the commitment verifier $\commitverify^{\sigma}$ are both at most $(L_a+L_p)\cdot\polylog(L_a+L_p)\cdot\subpoly(B,\frac{1}{\varepsilon})$, where $\polylog$ is a polylogarithmic function (i.e., a polynomial in the logarithms of its parameters) and $\subpoly$ is a subpolynomial function, and these two functions do not depend on $\sigma$.
        \item
            The running time of (sampling from) the run playbook $\protrunstrat^{\sigma}$ and the running time of the run verifier $\runverify^{\sigma}$ are both at most $R\cdot\polylog(R)\cdot\subpoly(B,\frac{1}{\varepsilon})$, where $\polylog$ is a polylogarithmic function and $\subpoly$ is a subpolynomial function, and these two functions do not depend on $\sigma$.
    \end{itemize}
\end{definition}

\section{General Existence Theorem}\label{general-existence}

In this \lcnamecref{general-existence}, we formally state our general theorem, establishing the existence of a commit-and-run protocol catalog that satisfies all our desiderata from \cref{desiderata}, allowing to run \emph{any mechanism} with the same desirable trust and secrecy guarantees as when run via a trustworthy mediator, yet without the need for any mediator or preexisting trust.

\begin{theorem}
\label{existence}
Under standard, widely believed computational infeasibility assumptions, there exists a hiding, committing, implementing, and computationally unobtrusive direct-revelation commit-and-run protocol catalog.
\end{theorem}

The statement ``Under standard, widely believed computational infeasibility assumptions'' in \cref{existence} should be interpreted along the lines of the discussion on cryptographic guarantees in the introduction: We prove \cref{existence} by showing that there exists an
implementing and computationally unobtrusive direct-revelation commit-and-run protocol catalog, as well as a family of widely studied computational problems, believed by expert cryptographers and computational complexity theorists to be hard, such that the following holds.\footnote{The hardness assumptions of these problems are often referred to as ``standard cryptographic assumptions'' in the cryptographic literature, and share a similar status to that of the computational assumption of $\mathrm{P}\ne\mathrm{NP}$ (despite technically being stronger): While they have neither been proven nor disproven, they have withstood extensive scrutiny by mathematicians and computer scientists, and their assumed hardness has been the basis for numerous theoretical and practical applications in various fields and industries.}\textsuperscript{,}\cryptofootnote{Our construction can, for example, be based upon any of the four following well-known computational infeasibility assumptions: (1) RSA, (2) LWE (learning with errors), (3) DDH (decisional Diffie--Hellman) and LPN (learning parity with noise), (4) Bilinear DDH\@. For more details, see \cryptocref{cp-proof-sec} in the supplementary material.} Any program that is either (1) a mechanism-designer strategy that breaks the committing guarantee of this protocol catalog, or (2) a distinguisher that breaks the hiding guarantee of this protocol catalog, can be directly used (in a precise, quantitative sense) to solve these hard problems, which are at the heart of a plethora of real-life cryptographic systems. This in particular means that any program that breaks the hiding or committing guarantees of \cref{existence} could be directly used to break into these real-life cryptographic systems.

We prove \cref{existence} following the high-level outline from the introduction (i.e., the commitment message contains a cryptographic commitment to a mechanism description and a zero-knowledge proof of properties of this description, and the run message contains a zero-knowledge proof of properly running the mechanism). To make the proof of \cref{existence} (and the gaps between our economics-centered desiderata and standard computer-science cryptographic guarantees that it navigates) as transparent as possible to readers without background in cryptography, in \cref{cp-sec} we restate the results from the cryptographic literature that we use in the proof by formulating a general, flexible ``one-stop-shop wrapper'' around all the existing cryptographic tools and definitions that are needed in the proof. In \cref{proofs}, in turn, we prove \cref{existence} utilizing the reformulated framework from \cref{cp-sec}. We hope that the self-contained nature and relatively lower barrier to entry of the reformulated, consolidated framework from \cref{cp-sec} might make it appealing also for independent use in other economic theory papers that may wish to avoid the learning curve associated with directly engaging with numerous papers from the cryptographic literature (as we do in \cryptocref{cp-proof-sec} in the supplementary material when we map out how the framework from \cref{cp-sec} follows from existing results).

\section{Discussion}\label{discussion}

In this paper, we introduce a general framework for committing to, proving properties of, and running any given mechanism, without disclosing it and without revealing, even in retrospect, more than is revealed by these properties and the outcome. Our framework decouples commitment from disclosure, decouples trust in the process from transparency of the process, and requires no third parties or mediators, trusted or otherwise, all without changing the strategy space of any player, practically maintaining strategic equivalence and equivalence in computational requirements to the traditional protocol in which the mechanism is publicly announced and then run. We provide several concrete, practical, ready-to-run examples as well as a general construction. In \cref{contracts} in the supplementary material, we also describe an extension of our framework for the realm of contracts.

The need for secrecy depends on the broader context in which the mechanism is used; in some contexts it might be crucial, while in some it may not be important. Hiding the mechanism might, however, be desirable even when secrecy is not a goal in and of itself. For example, it might be a means toward greater gains for the designer\footnote{\cite{MaskinT1990} show that a principal can gain by hiding her type if she knows the agent's prior over this type. While our framework does not require this informational assumption, it can facilitate the same gains when the assumption holds. Our framework furthermore eliminates the need for a trusted mediator that is embedded in informed-principal analyses \citep[see][who makes this assumption explicit]{Myerson1983} and keeps the principal's type hidden (beyond what is revealed by the realized outcome) even after running the mechanism.
}
or toward greater gains for all.\footnote{\cite{CopicP2008} study a bilateral-trade problem with incomplete information and show that bargaining through a neutral, mutually trusted mediator who keeps offers hidden might increase welfare. \cite{HornerV2009} consider a market-for-lemons bargaining setting with correlated values, and show that when keeping offers hidden until they are accepted, the process is more likely to reach agreement.
}
Hiding the mechanism might also be desirable from a behavioral perspective: \cite{GuillenH2018} find 
that advising subjects that the Top Trading Cycles matching mechanism is strategyproof, without describing the details of the mechanism, increases truthtelling rates in the mechanism compared to when the mechanism is also described.\footnote{\cite{DanzVW2022} show a different setting in which providing additional information contributes to strategic confusion and increases deviations from truthful reporting.} Should players learn to trust our machinery like they do browser encryption (as discussed in the introduction),
our framework might provide a path toward similar increases in truthtelling rates in various mechanism design settings even when participants do not \emph{a priori} trust the mechanism designer to run a mechanism with the described properties.\footnote{Non-truthtelling behavior has been observed when people interact with several common strategyproof mechanisms, including a sealed-bid second-price auction \citep{KagelL1993} and the Deferred Acceptance stable matching mechanism \citep{HassidimRS21,ShorrerS17}.}
In this sense, our framework could also be viewed as contributing to recent literature on how to best present mechanisms to players \citep{GonczarowskiHT2022,GonczarowskiHIT2024}.

A straightforward extension of our framework allows the revelation to each player of only the facets of the outcome that are payoff-relevant to her (e.g., what she gets and how much she pays, or where she is matched), revealing to each player even less than only the full outcome, yet maintaining credibility (cf.\ \citealp{AkbarpourL2020}).\footnote{This extension appeared in an earlier version of this paper. We omit it here for brevity and to keep the exposition focused.} In some settings, however, there might be benefit in revealing some non-payoff-relevant information to each player. For example, in a matching setting one might wish to publicly reveal some statistics about the mechanism's performance, such as how many students received their most-preferred institution.\footnote{This aggregate information is what, e.g., \cite{GonczarowskiKNR19} report when comparing their mechanism to alternative implementations.} In an auction setting, one might wish to publicly reveal whether there was a winner, or how many winners there were. Generally, there might be benefit in revealing both various properties of the mechanism before it is run (e.g., strategic or fairness properties) and non-payoff-relevant parts of the outcome after the mechanism is run. Our framework provides a fine lever through which arbitrary information of any of these kinds can be revealed or withheld.\footnote{This could even be combined---under certain additional conditions and limitations---with hiding the type reports; see \cref{hide-types} on Page~\pageref{hide-types}. Our framework could also be further extended to then selectively reveal only partial information about the type reports.} This establishes what could be called an \emph{un}revelation principle for mechanism design, paving the way for a further layer of \emph{revelation design} following the design of the rules of the mechanism within mechanism design.

In this paper, we focus on individual rationality and incentive compatibility as the properties of interest to be proven. However, that is merely an example, and our construction can easily work with any other properties satisfied by the mechanism, so long as a proof of reasonable length can be furnished for them.\footnote{This is only a technical constraint, as one would argue that any property known by the mechanism designer to hold would have such a proof; otherwise, how would the mechanism designer know that it holds?} For example, one could prove that a mechanism for bilateral trade is budget balanced, that a matching mechanism is stable, or that a voting mechanism has desirable properties such as various forms of fairness. One could even prove to regulators, without superfluously exposing trade secrets, that the mechanism abides by any relevant law, so long as this statement can be precisely formalized.\footnote{As part of an antitrust lawsuit, it has been claimed that Google tweaked reserve prices in auctions \citep{Nylen2023}. An ability of firms to prove that their mechanisms abide by various antitrust regulations without divulging trade secrets might have the potential to lower enforcement costs.
}
Sometimes, though, such statements cannot be formalized, whether because it is too hard or costly, or because one wishes to check whether the spirit, rather than the letter, of the law (or regulation) is upheld. Such applications, in which there might be no way around ``eyeballing'' the mechanism directly, do not fall within the scope of our framework. Another application we do not cover is choosing between mutually exclusive competing mechanisms or contracts offered by different principals, in which case just knowing that, e.g., each mechanism is IR and IC, or each contract satisfies IR and limited liability and incentivizes exertion of high effort, would not suffice. While cryptographic techniques could be used to evaluate a player's choice function over the set of proposed mechanisms/contracts while revealing only the final choice, formulating the details is outside the scope of this paper.

\clearpage
\appendix

\section[The Commit-then-Prove Scheme:\texorpdfstring{\\}{ }Reformulated, Consolidated Tools from Cryptographic Theory]{The Commit-then-Prove Scheme: Reformulated, Consolidated Tools from Cryptographic Theory}
\label[appendix]{cp-sec}

In this \lcnamecref{cp-sec}, we define what we call the ``commit-then-prove'' scheme. We both define the security requirements for this scheme and state its existence, which follows from existing results in the cryptographic literature. This scheme provides a general, flexible ``one-stop-shop wrapper'' around all the existing cryptographic tools and definitions that are needed in our proof of \cref{existence}. We hope that the self-contained nature and relatively lower barrier to entry of the commit-then-prove scheme might make this reformulated, consolidated tool appealing also for independent use in other economic theory papers that may wish to avoid the learning curve associated with directly engaging with numerous papers from the cryptographic literature (as we do in \cryptocref{cp-proof-sec} in the supplementary material when we map out how the results reformulated and consolidated in the current \lcnamecref{cp-sec} follow from existing results).

The ``commit-then-prove'' scheme is a set of programs for use by one \emph{prover} and one or more \emph{verifiers}. These programs are intended to be used in stages: In a \emph{commit} stage (there can be only one such stage), the prover uses the programs to irrevocably commit to some secret information $w$. In a \emph{prove} stage (there can be any number of such stages), the prover can use the programs to verifiably claim a statement $\phi$ about $w$, i.e., the verifiers can use the programs to verify that the prover (secretly) knows a proof $P$ such that $\ctprelation(w,\phi,P)$ holds, where $\ctprelation$ is a commonly known relation that verifies that $P$ is a formal proof of $\phi$ with respect to $w$. These requirements are formalized by way of emulating an \emph{ideal functionality}, called $\fctp$, that captures the expected correctness and secrecy properties of the scheme (see \cref{fig:fctp}). $\fctp$ is presented in terms of instructions to be followed by an imaginary ``trusted party.'' It should be stressed, though, that $\fctp$ is not meant to be followed by anyone; it is merely part of a mental experiment used to capture the security properties of the commit-then-prove scheme.
 
\pprotocol{\bf Functionality $\fctp^\ctprelation$}{The commit-then-prove functionality. This formulation is a modification of the Commit-and-Prove functionality from \cite{clos02}. While in \cite{clos02} the prover can deposit secret witnesses adaptively where each proof can relate to all witnesses deposited so far, for simplicity here proofs can relate only to the initial witness and the present one (denoted $P$); hence the name ``commit \emph{then} prove.'' The results in this \lcnamecref{cp-sec} mandate several additional properties beyond \cite{clos02}, including non-interactivity of the commit-then-prove functionality (i.e., each stage involving a single message generated by the prover), perfect completeness, uniformly distributed reference string, and identically distributed simulated reference string. See \cryptocref{cp-proof-sec} in the supplementary material for further details.
}{fig:fctp}{p}{

$\fctp^\ctprelation$
is
parametrized by a relation\footnote{\scriptsize$\ctprelation(w,\phi,P)$ should be interpreted as ``true if and only if $P$ is a (valid) formal proof that the statement $\phi$ holds with respect to $w$.'' In the context of our paper, for example, if $w$ is a mechanism description and $\phi$ is a formal statement of the form ``the description is an $(\runrandlen,R)$-runnable description of an IR and DSIC mechanism,'' then $\ctprelation(w,\phi,P)$ is true if and only if $P$ is a formal proof that $w$ is an $(\runrandlen,R)$-runnable description of an IR and DSIC mechanism. As another example, if $w$ is a mechanism description and $\phi$ is a statement of the form ``the described mechanism, when run on type profile $t$ using run bits $\runrand$, has outcome~$x$,'' then $\ctprelation(w,\phi,P)$ is true if and only if $P$ is a formal proof that $w$ describes a mechanism that when run on type profile $t$ using run bits $\runrand$, has outcome~$x$ (in this case, $P$ does nothing more than simply trace a run of $w$).} $\ctprelation(\cdot,\cdot,\cdot)$, a prover $C$, and the following four procedures:\footnote{\scriptsize In the $\fctp$-separation challenge, these procedures are constructed by the simulator~$\Sim$.}
\begin{itemize}
    \item \func{ChooseCommitmentToken}: takes no arguments, returns a string $c$ to be used as a commitment token.
    \item \func{ChooseProofToken}: takes $c,\phi,n$, returns a string $p$ to be used as a proof token.
    \item \func{ChooseCommittedInformation}: takes $c',\phi',n',p'$, returns $w'$ to be used in lieu of secret information for commitment $c'$.
    \item \func{ChooseProof}: takes $c',\phi',n',p'$ (but NOT $w'$), returns $P'$ with length $|P'|\le n'$ to be used in lieu of a proof of $\phi'$ with respect to $w'$, but may or may not be a valid proof that $\ctprelation(w',\phi',P')$ holds.
\end{itemize}

$\fctp^\ctprelation$ proceeds as follows (initially, no records exist):
\begin{enumerate}
\item {\bf Commit:}
Upon receiving input \msg{Commit$,w$} from $C$, where $w$ is a finite sequence of bits, do: If a \msg{Commit} message was previously received, then output \msg{Error} to $C$. Otherwise, let $c\leftarrow\func{ChooseCommitmentToken}()$. If a record of the form $(\cdot,c)$ already exists, then output \msg{Abort} to $C$.
Otherwise, record $(w,c)$ and output \msg{Commitment$,c$} to $C$.

\item {\bf Prove:}
Upon receiving input \msg{Prove$,\phi,n,P$} from $C$,\footnote{\scriptsize One should think of $n$ as a revealed upper bound on the length of the proof $P$.} do: 
If no \msg{Commit} message has yet been received, then output \msg{Error} to $C$. Otherwise, let $w$ be what was received as part of the \msg{Commit} message and let $c$ be what was returned. If $\ctprelation(w,\phi,P)$ does not hold, then output \msg{Proof,False} to $C$. Otherwise, let $p\leftarrow\func{ChooseProofToken}(c,\phi,n)$. If a record of the form $(c,\phi,n,p,\cdot)$ already exists, then output \msg{Abort} to $C$. Record $(c,\phi,n,p,\True)$ and output \msg{Proof$,p$} to $C$.

\item {\bf Verify:}
Upon receiving input \msg{Verify$,c',\phi',n',p'$} from any party,
\begin{itemize} 
\item If a record $(c'',\phi'',n'',p'',x)$ exists
where $(c'',\phi'',n'', p'')=(c',\phi',n',p')$, 
then output \msg{Verified$,x$}.
\item
If no record of the form $(\cdot,c')$ exists, then choose $w'\leftarrow\func{ChooseCommittedInformation}(c',\phi',n',p')$ and record $(w',c')$. Otherwise,~let $w'$ be such that the record $(w',c')$ exists.
\item
Let $P'\leftarrow\func{ChooseProof}(c',\phi',n',p')$. If $\ctprelation(w',\phi',P')$ holds, then let $x=\True$; otherwise, let $x=\False$. Record $(c',\phi',n',p',x)$, and output \msg{Verified$,x$}.
\end{itemize}
\end{enumerate}
}

Each of the programs in the commit-then-prove scheme is parametrized by a \emph{security parameter}. To streamline the use of the commit-then-prove scheme in our context, we restrict ourselves to programs whose output length is always the same and depends only on the security parameter. We call such a program \emph{output-regular}.

\begin{definition}[Commit-then-Prove Scheme]
\label{def:ctp}
A \emph{commit-then-prove scheme} is a quadruple of output-regular programs $\pi=(\mathit{drawrefstring},\mathit{commit},\mathit{prove},\mathit{verify})$, parametrized by a \emph{security parameter} $\lambda$. A commit-then-prove scheme $\pi$ is \emph{secure} for relation~$\ctprelation$, maximum committed information length $n_w$, maximum statement length $n_\phi$, and maximum proof length $n_P$ if two requirements are met. The first requirement is that there exist:
\begin{enumerate}
\item a \emph{parameter translation function} $\tau:\mathbb{N}\times\mathbb{R}_{>0}\rightarrow\mathbb{N}$, where $\lambda=\tau(B,\varepsilon)$ is subpolynomial in $B$ and $\nicefrac{1}{\varepsilon}$;
    \item a \emph{simulator program} $\Sim$ that, given a security parameter $\lambda$ and a sequence of bits $\simrand\in\{0,1\}^{\simrandlen}$ that we think of as an internal source of randomness, generates the following five procedures:
    \begin{itemize}
    \item \func{SimulateRefString}: takes no arguments, returns a binary string $s_{\text{sim}}$ to be used as a simulated reference string (see below). We require that if $\simrand$ is drawn uniformly at random, then the distribution of $s_{\text{sim}}$ is identical to that of the output of $\mathit{drawrefstring}(\lambda)$.
    \item \func{ChooseCommitmentToken}, \func{ChooseProofToken}, \func{ChooseCommittedInformation}, \linebreak
    \func{ChooseProof}: to be used by $\fctp$ as described momentarily.
\end{itemize} 
\end{enumerate}
The combined running time of $\Sim$ and each one of these procedures must be polynomial in $\lambda$, in the maximum committed information length $n_w$, in the maximum statement length $n_\phi$, and in the maximum proof length $n_P$. 

To define the second requirement for $\pi$ to be secure---intuitively, that its four programs closely emulate the ideal functionality $\fctp$---we first define the following ``challenge'' for an adversary program $\A$, which is officiated by an impartial \emph{challenge host}.

\paragraph{\underline{The $\fctp$-Separation Challenge:}}
\begin{enumerate}
\item The challenge host chooses a bit $b$ uniformly at random.
\item If $b=0$ then the challenge host runs $\A$ in an interaction with the scheme $\pi=(\mathit{drawrefstring},\mathit{commit},\mathit{prove},\mathit{verify})$, parametrized by $\lambda$. That is:
\begin{enumerate}
\item The challenge host computes\footnote{$\mathit{drawrefstring}$ is a randomized program, as are $\mathit{commit}$ and $\mathit{prove}$.} a \emph{reference string} $s\sim \mathit{drawrefstring}(\lambda)$ and sends $s$ to $\A$.
    \item $\A$ can repeatedly choose between one of the following three options.
    \begin{enumerate}
        \item $\A$ sends \msg{Commit$,w$} to the challenge host (where $w$ is of length at most $n_w$). In this case, if a \msg{Commit} message was already previously sent to the challenge host, the challenge host sends \msg{Error} to $\A$. Otherwise, the challenge host computes $(c,\mu)\sim\mathit{commit}(\lambda,w,s)$ and sends \msg{Commitment$,c$} to $\A$. (And remembers the \emph{state} $\mu$.)
        \item
        $\A$ sends \msg{Prove$,\phi,n,P$} to the challenge host, where $n\le n_P$, and where $\phi$ and $P$ are of respective length at most $n_\phi$ and $n$. 
        In this case, if a \msg{Commit} message was never previously sent to the challenge host, the challenge host sends \msg{Error} to $\A$. Otherwise, the challenge host computes $(p,\mu')\sim\mathit{prove}(\lambda,\mu,\phi,n,P)$, updates the state $\mu\leftarrow\mu'$, and sends \msg{Proof$,p$} to $\A$.
        \item $\A$ sends \msg{Verify$,c',\phi',n',p'$} to the challenge host, where $n'\le n_P$, where $\phi'$ is of length at most $n_\phi$, and where $c',p'$ are of the length of the output of the (output-regular) programs in $\pi$.
        In this case, the challenge host computes $x=\mathit{verify}(\lambda,s,c',\phi',n',p')$ and sends \msg{Verified$,x$} to~$\A$.
    \end{enumerate}
\end{enumerate}
\item If $b=1$ then the challenge host runs $\A$ in an interaction with $\fctp^\ctprelation$ (see 
\cref{fig:fctp}) and $\Sim$ with security parameter $\lambda$.
That is:
\begin{enumerate}
\item \begin{sloppypar}
The challenge host draws $\simrand\in\{0,1\}^{\simrandlen}$ uniformly at random, and runs $\Sim=\Sim(\simrand,\lambda)$ to generate the five procedures \func{SimulateRefString}, \func{ChooseCommitmentToken}, \func{ChooseProofToken}, \func{ChooseCommittedInformation}, and \func{ChooseProof}.
Next the challenge host invokes an instance of $\fctp^\ctprelation$ parametrized with \func{ChooseCommitmentToken}, \func{ChooseProofToken}, \func{ChooseCommittedInformation}, and \func{ChooseProof}. The challenge host then runs \func{SimulateRefString} to obtain a simulated reference string $s_{\text{sim}}$, which the challenge host sends to $\A$ as a reference string.
\end{sloppypar}
\item $\A$ can repeatedly choose between one of the following three options: to send \msg{Commit$,w$} to the challenge host, to send \msg{Prove$,\phi,n,P$} to the challenge host, or to send \msg{Verify$,c',\phi',n',p'$} to the challenge host. Either way, the challenge host forwards the message to the instance of $\fctp^\ctprelation$ initiated in the previous step,
and sends the returned output to $\A$.
\end{enumerate}
\item Finally, $\A$ outputs a value $b'$. We say that $\A$ \emph{wins the challenge} if $b'=b$.
\end{enumerate}
An adversary program $\A$ has \emph{separation advantage} $\varepsilon$ if it wins the $\fctp$-separation challenge with probability $\nicefrac{1}{2}+\varepsilon$ (i.e., it has an advantage of $\varepsilon$ compared to a uniform guess). This captures the idea that from the adversary's lens, the scheme closely emulates $\fctp$.

The second requirement for a commit-then-prove scheme $\pi$ to be \emph{secure} is that for every running time bound $B\in\mathbb{N}$ and every separation advantage bound $\varepsilon>0$, it holds that every adversary program $\A$ that runs in time at most $B\cdot C_\pi(\lambda)$---where $\lambda=\tau(B,\varepsilon)$ and $C_\pi(\lambda)$ is the sum of the maximum running times of $\mathit{commit}$, $\mathit{prove}$, and $\mathit{verify}$ with security parameter $\lambda$---has separation advantage at most $\varepsilon$.
Such an adversary $\A$ that has separation advantage greater than $\varepsilon$ is said to \emph{break the security} of $\pi$.\cryptofootnote{The requirement that $\pi$ is a secure commit-then-prove scheme is a restatement, with three additions discussed in \cryptocref{cp-proof-sec} in the supplementary material, of the requirement that $\pi$ UC-realizes $\fctp^\ctprelation$ as in \cite{Canetti20}, which we managed to simplify by avoiding the definition of the UC execution framework in its full generality before specializing it for $\fctp^\ctprelation$.}
\end{definition}

\begin{definition}[Perfect Completeness]\label{def:ctp-complete}
An adversary $\A$ is \emph{benign} if in all of its \linebreak \msg{Verify$,c,\phi,n,p$} messages the value $c$ is the value returned by the challenge host in response to an earlier \msg{Commit$,w$} message and the value $p$ is the value returned by the challenge host in response to an earlier \msg{Prove$,\phi,n,P$} message, and if it does not otherwise use in the calculation of its output any values returned by the challenge host in response to \msg{Commit$,\cdot$} or \msg{Prove$,\cdot$} messages. A commit-then-prove scheme has \emph{perfect completeness} if there exists a simulator $\Sim_{\text{benign}}$ such that all benign adversaries have separation advantage $\varepsilon=0$ in the $\fctp$-separation challenge with simulator $\Sim_{\text{benign}}$.
\end{definition}

\begin{sloppypar}
\begin{definition}[Uniform Reference String]
A commit-then-prove scheme $\pi=(\mathit{drawrefstring},\mathit{commit},\mathit{prove},\mathit{verify})$ has a \emph{uniform reference string} if $\mathit{drawrefstring}$ simply draws a uniform random string of some predefined length (that may depend on $\lambda$).
\end{definition}
\end{sloppypar}

\begin{definition}[Feasible Computability]
A commit-then-prove scheme $\pi=\linebreak(\mathit{drawrefstring},\mathit{commit},\mathit{prove},\mathit{verify})$ is \emph{feasibly computable} if all of the following hold:
\begin{itemize}
\item
The running time of (sampling from) $\mathit{drawrefstring}$ is polynomial in the security parameter~$\lambda$.
\item
The running time of (sampling from) $\mathit{commit}$ is quasilinear in the maximum committed information length $n_w$ and polynomial in the security parameter~$\lambda$. That is, the running time is at most $n_w\cdot\polylog(n_w)\cdot\poly(\lambda)$.
\item
The running times of (sampling from) $\mathit{prove}$ and $\mathit{verify}$ are quasilinear in the maximum committed information length $n_w$, the length of the statement $\phi$, the value $n$, and the maximum running time of $\ctprelation(w',\phi,P')$ over all $w',P'$ with respective lengths at most $n_w$ and $n$, and polynomial in the security parameter~$\lambda$. That is, the running times are at most $(n_w+|\phi|+n+R_{\ctprelation(\cdot,\phi,\cdot)})\cdot\polylog(n_w+|\phi|+n+R_{\ctprelation(\cdot,\phi,\cdot)})\cdot\poly(\lambda)$, where $|s|$ denotes the length of the string $s$ and $R_{\ctprelation(\cdot,\phi,\cdot)}$ denotes the maximum running time of $\ctprelation(w',\phi,P')$ over all $w',P'$ with respective lengths at most $n_w$ and $n$.
\end{itemize}
\end{definition}

We now state the main result of this \lcnamecref{cp-sec}, which establishes the existence of a commit-then-prove scheme with all of the above properties under suitable computational infeasibility assumptions. 

\begin{theorem}\label{ctp}
Under standard, widely believed computational infeasibility assumptions, there exists a feasibly computable secure commit-then-prove scheme with perfect completeness and uniform reference string for any computable relation $\ctprelation$.
\end{theorem}

See \cryptocref{cp-proof-sec} in the supplementary material for how \cref{ctp} immediately follows from existing results in the cryptographic literature. We note that these results are constructive. That is, they provide an algorithmic way for transforming any adversary that breaks the security of the commit-then-prove scheme into an adversary that breaks the security of one of the underlying standard computational infeasibility assumptions.

\section[Proof of Theorem~\ref{existence} Based on the Commit-then-Prove Scheme]{Proof of Theorem~\ref{existence} Based on the \texorpdfstring{\\}{ }Commit-then-Prove Scheme}\label[appendix]{proofs}

In this \lcnamecref{proofs}, we prove \cref{existence} based on the framework from \cref{cp-sec}, constructing the commit-and-run protocol catalogs guaranteed to exist by this \lcnamecref{existence}.

\begin{proof}[Proof of \cref{existence}]
To prove \cref{existence}, we construct a commit-and-run protocol \linebreak $\bigl((\commitrandlen^\sigma,\commitmsglen^\sigma,\commitverify^\sigma,\runrandlen^\sigma,\runmsglen^\sigma,\runverify^\sigma), (\protcommitstrat^\sigma,\protrunstrat^\sigma)\bigr)$ for each specification $\sigma\in\Sigma$. Let $\sigma=(L_a,\runrandlen,R,L_p,B,\varepsilon)$. We first define a commit-then-prove scheme $\pi^\sigma=(\mathit{drawrefstring},\mathit{commit},\mathit{prove},\mathit{verify})$ using \cref{ctp}, and then use this scheme to construct the commit-and-run protocol.

\paragraph{Defining an appropriate commit-then-prove scheme} Let $\pi^\sigma=(\mathit{drawrefstring},\linebreak\mathit{commit},\mathit{prove},\mathit{verify})$ be a commit-then-prove scheme as guaranteed by \cref{ctp} for the following parameters, which are defined using the specification $\sigma=(L_a,\runrandlen,R,L_p,B,\varepsilon)$:
\begin{itemize}
    \item $\ctprelation(A,\phi,P)$ is the relation ``$P$ is a proof that statement $\phi$ holds for $A$.''
    \item Set $n_w \leftarrow L_a$.
    \item Set $n_P \leftarrow \max\{L_p,c\cdot R\}$, where $c$ is a constant such that $c\cdot R$ upper bounds the maximum length of a proof, for a program that runs in time at most $R$, of what its output is for a specific given input. (By assumption, this length is proportional to $R$.)
    \item Set $n_\phi \leftarrow $ the maximum length of any formal statement that formalizes a statement of one of the following two forms:
    \begin{itemize}
        \item ``This is an $(\runrandlen,R)$-runnable description of an IR and DSIC mechanism.''
        \item ``This description, when run on type profile $t$ using run bits~$\runrand$, has outcome~$x$.''
    \end{itemize}
    \item Set $\lambda \leftarrow \tau\bigl(B+2,\nicefrac{\varepsilon}{4}\bigr)$.
\end{itemize}

\paragraph{Defining the grammar rules} We define the grammar rules $(\commitrandlen^\sigma,\commitmsglen^\sigma,\commitverify^\sigma,\runrandlen^\sigma,\runmsglen^\sigma,\runverify^\sigma)$ for the commit-and-run protocol using the specification $\sigma=(L_a,\runrandlen,R,L_p,B,\varepsilon)$ and the scheme $\pi^\sigma=(\mathit{drawrefstring},\mathit{commit},\mathit{prove},\mathit{verify})$ as follows:
\begin{itemize}
\item $\commitrandlen^\sigma$ --- number of bits returned by a call to $\mathit{drawrefstring}$.
\item $\commitmsglen^\sigma$ --- sum of number of bits returned by $\mathit{commit}$ and number of bits returned by $\mathit{prove}$. We henceforth use the notation $\commitmsg=(C,p_\commitsubs)$.
\item $\commitverify^\sigma(\commitrand,\commitmsg)=\commitverify^\sigma\bigl(\commitrand,(C,p_\commitsubs)\bigr)$ is computed by running $\mathit{verify}(\lambda,\commitrand,C,\commitcorrectprop,L_p,p_\commitsubs)$, where $\commitcorrectprop$ is the formal statement that formalizes the claim ``this is an $(\runrandlen,R)$-runnable description of an IR and DSIC mechanism.''
\item ($\runrandlen^\sigma$ is copied from $\sigma$.)
\item $\runmsglen^\sigma$ --- number of bits returned by $\mathit{prove}$.
\item\begin{sloppypar} $\runverify^\sigma\bigl(\commitrand,\commitmsg,t,\runrand,(x,p_\runsubs)\bigr)=\runverify^\sigma\bigl(\commitrand,(C,p_\commitsubs),t,\runrand,(x,p_\runsubs)\bigr)$ is computed by running $\mathit{verify}\bigl(\lambda,\commitrand,C,\runcorrectprop(t,\runrand,x),c\cdot R,p_\runsubs\bigr)$, where $\runcorrectprop(t,\runrand,x)$ is the formal statement that formalizes the claim ``this description, when run on type profile~$t$ using run bits $\runrand$, has outcome~$x$'' and $c$ is as in the definition of the value of $n_P$ earlier in this proof.
\end{sloppypar}
\end{itemize}

\paragraph{Defining the playbook} We define the playbook $(\protcommitstrat^\sigma,\protrunstrat^\sigma)$ as follows:
\begin{itemize}
    \item Commitment strategy for mechanism designer with mechanism description~$A$ and proof $P$ that $A$ satisfies $\commitcorrectprop$:
    \begin{itemize}
    \item Compute $(C,s')\sim\mathit{commit}(\lambda,A,\commitrand)$.
    \item
    Compute $(p_\commitsubs,s)\sim\mathit{prove}(\lambda,s',\commitcorrectprop,L_p,P)$.
    \item
    Let $\commitmsg\leftarrow(C,p_\commitsubs)$ and return $(\commitmsg,s)$.
    \end{itemize}
    \item
    Run strategy for mechanism designer with mechanism description $A$:
    \begin{itemize}
    \item
    Compute $x\leftarrow M_A(t,\runrand)$.
    \item
    Let $P_\runsubs$ be a proof that $A$ satisfies $\runcorrectprop(t,\runrand,x)$ (i.e., a proof that follows the steps of the calculation of $M_A(t,\runrand)$ and proves that each step is correct).
    \item
    Compute $(p_\runsubs,s'')\sim\mathit{prove}\bigl(\lambda,s,\runcorrectprop(t,\runrand,x),c\cdot R,P_\runsubs\bigr)$, where $c$ is as in the definition of the value of $n_P$ earlier in this proof.
    \item
    Let $\runmsg\leftarrow p_\runsubs$ and return $(x,\runmsg)$.
    \end{itemize}
\end{itemize}

\begin{sloppypar}
\paragraph{Properties of the protocol catalog} We have concluded the definition of our protocol for each specification $\sigma$. It remains to show that the resulting protocol catalog $\bigl((\commitrandlen^\sigma,\commitmsglen^\sigma,\commitverify^\sigma,\runrandlen^\sigma,\runmsglen^\sigma,\runverify^\sigma), (\protcommitstrat^\sigma,\protrunstrat^\sigma)\bigr)_{\sigma\in\Sigma}$ is implementing, committing, hiding, and computationally unobtrusive.
\end{sloppypar}

\paragraph{Computationally unobtrusive} The catalog is computationally unobtrusive (\cref{efficient}) because the commit-then-prove schemes $\pi^\sigma$ are feasibly computable by \cref{ctp}, and since $R\ge L_a$ and the running time of $\ctprelation$ is linear in the sum of the lengths of its arguments.

\paragraph{Implementing} To prove that the catalog is implementing (\cref{implementing}), for every $\sigma$ note that since the playbook by definition calculates the outcome $M_A(t,\runrand)$, it suffices to prove that verifications always pass when the mechanism designer follows the playbook. We will show that this follows from $\pi^\sigma$ having perfect completeness. Recall that by \cref{def:ctp-complete} there exists a simulator $\Sim_{\text{benign}}$ such that the separation advantage of every benign adversary $\A$ is $\varepsilon=0$.

Assume for contradiction that for some specification $\sigma$, verifications fail with positive probability for some $A$, $P$, $t$, and $\runrand$. Construct a benign adversary $\A$ for the $\fctp$-separation challenge as follows. $\A$ simply runs our commit-and-run protocol, playing the roles of both the mechanism designer (using the playbook) and the players according to $A$, $P$, $t$, and $\runrand$, replacing calls to $\mathit{commit}$, $\mathit{prove}$, and $\mathit{verify}$ with respective messages to the challenge host. $\A$ outputs $b'=1$ if verifications pass, and $b'=0$ if any verification fails.

We obtain a contradiction by observing that this benign adversary has positive separation advantage. Indeed, note that when $b=1$ (interacting with $\Sim_{\text{benign}}$), verifications always pass, while when $b=0$ (interacting with $\pi^\sigma$), by our assumption they fail with positive probability.

\paragraph{Committing} To prove that the catalog is committing (\cref{committing}), for every specification $\sigma$ and mechanism-designer strategy $(\commitstrat,\runstrat)$ computable in time at most $B$ we must show that there exists an appropriate (not necessarily feasibly computable\cryptofootnote{Recall that our notion of committing differs from the standard cryptographic notion of extractability, which requires the ability to feasibly approximately sample pairs $(\commitrand,A)$. We have chosen our notion to maximize economic usability. We nonetheless note that the constructions in this paper satisfy both notions.}) function $\alpha:\{0,1\}^{\commitrandlen}\to\Delta(\mathcal{A}_{L_a,\runrandlen,R})$. We implicitly define this function by directly defining the coupling $\Xi$ (which itself need not necessarily be feasibly computable) of the mechanism-designer commitment strategy and this function, using the simulator~$\Sim$ that exists for $\pi^\sigma$ by \cref{def:ctp}:
\begin{enumerate}
\item 
\begin{sloppypar}
Draw $\simrand\in\{0,1\}^{\simrandlen}$ uniformly at random from all sequences $\simrand$ such that the procedure \func{SimulateRefString} constructed by $\Sim(\simrand,\lambda)$ chooses~$\commitrand$ as the simulated reference string.
Initialize $\Sim=\Sim(\simrand,\lambda)$.
\end{sloppypar}
\item Let $(\commitmsg,s)\sim\commitstrat(\commitrand)$.
\item If not $\commitverify(\commitrand,\commitmsg)$, then return $\bigl((\commitmsg,s),A\bigr)$ for an arbitrary description $A\in \mathcal{A}_{L_a,\runrandlen,R}$ of an IR and DSIC mechanism.
\item Let $(c,p_\commitsubs)\leftarrow \commitmsg$.
\item\begin{sloppypar} Recall that the simulator $\Sim$, together with constructing the procedure $\func{SimulateRefString}$ that chooses the simulated reference string (which is here used as $\commitrand$), constructs four procedures, including \func{ChooseCommittedInformation} and \func{ChooseProof}; let $A\leftarrow\func{ChooseCommittedInformation}(c,\commitcorrectprop,L_p,p_\commitsubs)$.
\end{sloppypar}
\item Let $P\leftarrow\func{ChooseProof}(c,\commitcorrectprop,L_p,p_\commitsubs)$.
\item If $\ctprelation(A,\commitcorrectprop,P)$ does not hold, then let $A$ instead be an arbitrary description in $\mathcal{A}_{L_a,\runrandlen,R}$ of an IR and DSIC mechanism.
\item
Return $\bigl(((c,p_\commitsubs),s),A\bigr)$.
\end{enumerate}

By definition of $\Xi$, its first marginal is distributed like $\commitstrat$. By definition of $\Xi$, $\ctprelation$, and $\commitcorrectprop$, any description returned with positive probability by the second marginal of $\Xi$ is of an IR and DSIC mechanism.

Let $t\in T$ and $\runrand\in\{0,1\}^{\runrandlen}$. It remains to show that:
\begin{multline}
\label{commit-violation}
    \Pr_{\commitrand\sim U(\{0,1\}^{\commitrandlen}),\Xi,\runstrat}\bigl[\commitverify\bigl(\commitrand,\commitmsg\bigr)\ \&\ \runverify\bigl(\commitrand,\commitmsg,t,\runrand,(x,\runmsg)\bigr)\ \&\ x\ne M_A(t,\runrand)\bigr]\le \varepsilon \\
\text{for $\bigl((\commitmsg,s),A\bigr)\sim\Xi(\commitrand)$ and $(x,\runmsg)\sim \runstrat\bigl(\commitrand,(\commitmsg,s),t,\runrand\bigr)$.}
\end{multline}

Let $q$ be the probability on the left-hand side of \cref{commit-violation}.
To prove that $q\le\varepsilon$ as required, we construct an adversary $\A$ for the $\fctp$-separation challenge as follows:
\begin{itemize}
    \item Run our commit-and-run protocol with the reference string of the $\fctp$-separation challenge as $\commitrand$ using the mechanism-designer strategy $(\commitstrat,\runstrat)$, with player reports $t$ and run bits $\runrand$. Whenever there is a call to $\mathit{verify}$, in addition to this call also send a respective message to the challenge host.
    \item
    If the verifications pass (i.e., both $\commitverify$ and $\runverify$ return $\True$) and yet the challenge host returns $\False$ for a \msg{Verify} message (either the one in the commitment phase or the one in the run phase), then output $b'=1$. Otherwise, output $b'=0$.
\end{itemize}
The running time of $\A$ is at most $B+2C_{\pi^\sigma}(\lambda)$ (where $C_{\pi^\sigma}(\lambda)$ is as in \cref{def:ctp}). Therefore, by the guarantee in \cref{def:ctp}, its separation advantage is at most $\nicefrac{\varepsilon}{4}$.

Recall that if $\simrand$ is chosen uniformly at random, then $\func{SimulateRefString}$ draws its output reference string according to the true reference string distribution, i.e., uniformly at random since $\pi^\sigma$ has a uniform reference string. Therefore, $\simrand$ used by $\Xi$ in \cref{commit-violation} is distributed uniformly at random in $\{0,1\}^{\simrandlen}$, just like in the $\fctp$-separation challenge. This creates a natural coupling between the probability spaces in \cref{commit-violation} and in the $\fctp$-separation challenge when the interaction is with $\fctp^\ctprelation$ and $\Sim$ (i.e., when $b=1$).

Note that $\A$ never sends any \msg{Commit} or \msg{Prove} messages to the challenge host. Therefore, if $\A$ is run in an interaction with $\fctp^\ctprelation$ and $\Sim$ (i.e., if $b=1$), upon receipt of the commitment-verification \msg{Verify} message, $\fctp$ runs \func{ChooseCommittedInformation} with inputs $c$, $\commitcorrectprop$, $L_p$, and $p_\commitsubs$ to obtain a (valid or invalid) description $A$, and \func{ChooseProof} with inputs $c$, $\commitcorrectprop$, $L_p$, and $p_\commitsubs$ to obtain a (valid or invalid) proof $P$, and returns $\True$ only if $\ctprelation(A,\commitcorrectprop,P)$ holds; moreover, upon receipt of the run-verification \msg{Verify} message, $\fctp$ returns $\True$ only if $\ctprelation\bigl(A,\runcorrectprop(t,\runrand,x),P_\runsubs\bigr)$ holds for some~$P_\runsubs$. Therefore, the condition under which $\A$ outputs $b'=1$ holds whenever the event whose probability is on the left-hand side of \cref{commit-violation} holds. On the other hand, if $\A$ is run in interaction with $\pi^\sigma$ (i.e., $b=0$) then if any verification passes, then the challenge host returns $\True$ for the corresponding \msg{Verify} message, and so the condition under which $\A$ outputs $b'=1$ never holds. To sum up, with probability at least $q$, we have that $\A$ wins the $\fctp$-separation challenge. Otherwise $\A$ wins with probability~$\nicefrac{1}{2}$. Overall, $\A$ wins with probability at least $q\cdot1+(1-q)\cdot\nicefrac{1}{2}=\nicefrac{1}{2}+\nicefrac{q}{2}$, i.e., has separation advantage at least $\nicefrac{q}{2}$. Hence, $q\le\varepsilon$, as required.\footnote{Recall that otherwise, $\A$ breaks the security of $\pi^\sigma$, and hence can be feasibly transformed into an adversary that breaks the security of the underlying standard, widely believed computational infeasibility assumption.}

\paragraph{Hiding} To prove that the catalog is hiding (\cref{hiding}), assume for contradiction that there exists a distinguisher program $\D$ that demonstrates that the protocol is not hiding for some specification $\sigma$ by distinguishing between two description-proof pairs $(A_1,P_1),(A_2,P_2)$ with probability difference greater than $\varepsilon$.

We first consider the case in which $\D$ demonstrates that the run step is not hiding. When this is the case, there exist $t\in T$ and $\runrand\in\{0,1\}^{\runrandlen}$ such that $|q_1-q_2|>\varepsilon$.

We construct two adversaries $\A_1,\A_2$ for the $\fctp$-separation challenge as follows. For every $i\in\{1,2\}$, the adversary $\A_i$ simply runs our commit-and-run protocol, without any of the verification steps, playing the roles of both the mechanism designer (using the playbook) and the players according to $A=A_i$, $P=P_i$, $t$, $\runrand$, where $\commitrand$ is taken to equal the reference string, replacing calls to $\mathit{commit}$ and $\mathit{prove}$ (no calls to $\mathit{verify}$ occur as no verification steps are run) with respective messages to the challenge host. $\A_i$ then runs $\D$ on the resulting output, outputting $b'=0$ if $\D$ outputs ``This is $A_1$'' and outputting $b'=1$ otherwise.

We note that since $M_{A_1}(t,\runrand)=M_{A_2}(t,\runrand)$, running the $\fctp$-separation challenge with $b=1$ with either $\A_1$ or $\A_2$ results in the exact same (distribution over) input to $\D$. Therefore, the probability of each of these adversaries outputting $b'=0$ when $b=1$ is the same; denote this probability by $p$. Since $|q_1-q_2|>\varepsilon$, by the triangle inequality there exists $i\in\{1,2\}$ such that $|q_i-p|>\nicefrac{\varepsilon}{2}$. Denote $\A=\A_i$.

The running time of $\A$ is at most $2C_{\pi^\sigma}(\lambda)+B$ (where $C_{\pi^\sigma}(\lambda)$ is as in \cref{def:ctp}; note that since $t$ and $\runrand$ are fixed when we construct the adversaries, the outcome $x=M_{A_i}(t,\runrand)$ and the proof $P_\runsubs$ that $A_i$ satisfies $\runcorrectprop(t,\runrand,x)$ can be precomputed and hardcoded into $\A_i$, avoiding an additional running time of $R$). We obtain a contradiction by observing that $\A$ has separation advantage greater than~$\nicefrac{\varepsilon}{4}$.\footnote{Recall that this would mean that $\A$ breaks the security of $\pi^\sigma$, and hence can be feasibly transformed into an adversary that breaks the security of the underlying standard, widely believed computational infeasibility assumption.} For $x,y\in\{0,1\}$, denote by $\Pr[b'=x\mid b=y]$ the probability of $\A$ outputting $b'=x$ when truly $b=y$. By definition of $\D$ and construction of $\A$, we have that $\Pr[b'=0\mid b=1]=p$ and $\Pr[b'=0\mid{b=0}]=q_i$. Hence, $\bigl|\Pr[b'=0\mid b=0]-\Pr[b'=0\mid b=1]\bigr|>\nicefrac{\varepsilon}{2}$. Without loss of generality, assume that $\Pr[b'=0\mid b=0]>\Pr[b'=0\mid b=1]+\nicefrac{\varepsilon}{2}$ (otherwise switch the two outputs of the adversary $\A$). We have:
\begin{multline*}
\Pr[\text{$\A$ wins}]=\Pr[b'=0~\&~b=0]+\Pr[b'=1~\&~b=1]=\\\Pr[b'=0\mid b=0]\cdot\nicefrac{1}{2}+\bigl(1-\Pr[b'=0\mid b=1]\bigr)\cdot\nicefrac{1}{2}=\\\nicefrac{1}{2}+\nicefrac{1}{2}\cdot\bigl(\Pr[b'=0\mid b=0]-\Pr[b'=0\mid b=1]\bigr)>\nicefrac{1}{2}+\nicefrac{\varepsilon}{4},
\end{multline*}
as required.

The case in which $\D$ demonstrates that the commitment step is not hiding is handled similarly, having each $\A_i$ only run our commit-and-run protocol until the end of the commitment step.
\end{proof}

{
\small
\singlespacing
\phantomsection
\addcontentsline{toc}{section}{References}
\putbib

}

\clearpage 

\end{bibunit}

\addtocontents{toc}{\protect\setcounter{tocdepth}{1}}

\addtocontents{toc}{\vspace{1em}}
\setcounter{page}{1}
\renewcommand{\thepage}{S.\arabic{page}}
\setcounter{table}{0} 
\renewcommand{\thetable}{S.\arabic{table}}
\setcounter{figure}{0} 
\renewcommand{\thefigure}{S.\arabic{figure}}
\setcounter{footnote}{0} 
\renewcommand{\thefootnote}{\arabic{footnote}}

\makeatletter
\gdef\@extra@binfo{@supp}
\gdef\@extra@b@citeb{@supp}
\makeatother

\begin{bibunit}

\phantomsection
\addcontentsline{toc}{section}{Supplementary Material}

\section*{\begin{center}{Supplementary Material (Not for Publication) for}\end{center}
\begin{center}\textnormal{\LARGE Mechanism Design Without Disclosure:\\Committing to and Running Hidden Mechanisms\\[1em]\large Ran Canetti, Amos Fiat, Yannai A.\ Gonczarowski\\\documentdate}\end{center}}

\section[Proofs of Knowledge of Discrete Log: A Survey and an Extension]{Proofs of Knowledge of Discrete Log:\texorpdfstring{\\}{ }A Survey and an Extension}\label[appendix]{cds}
Our illustrative examples from \cref{examples} use a special form of interactive proofs to prove knowledge of the discrete logarithms of various elements in a large algebraic group $G$. In this \lcnamecref{cds}, we first survey the general definition of this special form of zero-knowledge proofs, called Sigma protocols (not to be confused with our commit-and-run protocols from \cref{framework}), then survey the existing Sigma protocol, called CDS, that we use in Illustrative Examples 1 and 2, and finally present a new Sigma protocol, which extends CDS, that we use in Illustrative Examples 3 and 4.

\subsection{Sigma Protocols}

Sigma protocols are a convenient form of proofs.\cryptofootnote{The formulation of Sigma protocols here restates the standard one without explicitly defining the concept of a security parameter.} Consider a relation $R$ such that $(x,w)\in R$ is interpreted as meaning that $x$ and $w$ together satisfy the property represented by $R$. We are concerned with the problem of proving, for a commonly known $x$, that one knows a value $w$ such that $(x,w)\in R$ without disclosing anything else about $w$. A concrete example of such a relation $R$, used in Illustrative Example 1 from \cref{examples} and surveyed in \cref{schnorr-cds} below, is such that $(x,w)\in R$ for $x=(a_1,\ldots,a_k)$ and $w=(i,\rho)$ if and only if for a prespecified prime $q$ and prespecified $g<q$ it holds that $g^\rho=a_i \bmod q$.

\begin{definition}
A \emph{Sigma protocol} for a relation $R$ is a pair $(P,V)$ of randomized algorithms, where $P$ (standing for ``prover'') is an algorithm with input $(x,w)$, and $V$ (standing for ``verifier'') is an algorithm with input $x$, which are intended to jointly run in the following structured three-message 
interaction: the prover sends a message to the verifier, then the verifier sends a message to the prover called the ``challenge'' message, then the prover sends another message to the verifier called the ``response'' message, and finally, the verifier decides whether to ``accept'' (indicating trust that the prover, whether or not it is in fact the algorithm $P$, knows a value $w'$ such that $(x,w')\in R$). The (single, ``challenge'') message of $V$ 
consists of a uniformly random element from a prespecified domain $\mathcal{C}$, and the final decision made by $V$ is a deterministic function of the common input $x$ and the three messages. Furthermore, the following properties hold:
\begin{itemize}
\item \textbf{Completeness:}
For any $(x,w)\in R$, 
for any first message $\alpha$ sent by $P$ given $(x,w)$, and for any challenge message $\beta\in\mathcal{C}$, we have that $V$ \emph{accepts} when given the \emph{transcript} $(x,\alpha,\beta,\gamma)$, 
 where $\gamma$ is the corresponding response message computed by $P$ after receiving the challenge message $\beta$. In short, denoting by $s$ the private state maintained by $P$ between sending its two messages:
\[
\Pr\bigl[V(x,\alpha,\beta,\gamma)=\text{``accept''}\bigr]=1\text{ for }(\alpha,s)\sim P(x,w), \beta\in\mathcal{C}, \gamma\sim P(x,w,s,\beta).
\]
\item \textbf{Knowledge Extraction:}
There exists a deterministic procedure $K$ whose running time is polylogarithmic in $|\mathcal{C}|$, that given any two $V$-accepted transcripts that share the same input $x$ and first message $\alpha$ but have distinct challenge messages, outputs a witness $w$ such that $(x,w)\in R$. That is, for any $x,\alpha$, and $\beta_1,\beta_2\in\mathcal{C}$ such that $\beta_1\neq\beta_2$, and any $\gamma_1,\gamma_2$ such that $V(x,\alpha,\beta_1,\gamma_1)= V(x,\alpha,\beta_2,\gamma_2)=\textit{accept}$, we have $(x,w)\in R$ where $w=K(x,\alpha,\beta_1,\beta_2,\gamma_1,\gamma_2)$.
\item \textbf{Perfect Witness Indistinguishability:}
For any input $x$ and two witnesses $w_1,w_2$ such that $(x,w_1)\in R$ and $(x,w_2)\in R$, and any (potentially misbehaving, i.e., differing from $V$) verifier algorithm $V'$ that replaces $V$ in interacting with $P$, it holds that the output of $V'$ from interacting with $P$ that has input $(x,w_1)$ is distributed identically to the output of $V'$ from interacting with $P$ that has input $(x,w_2)$.
\end{itemize}
\end{definition}

To understand in what sense Sigma protocols are proofs, and how they relate to our illustrative examples, a short discussion is in order. Consider, e.g., Illustrative Example~1, where the seller proves knowledge of the discrete log, in some base $g$, of one of a sequence of elements $a_1,\ldots,a_k$ of $G$ to the buyer. To do so, we have the seller use a Sigma protocol called CDS (surveyed in \cref{schnorr-cds}), which is a Sigma protocol for the relation $R$ mentioned above, which is defined such that $(x,w)\in R$ for $x=(a_1,\ldots,a_k)$ and $w=(i,\rho)$ if and only if for a prespecified prime $q$ and prespecified $g<q$ it holds that $g^\rho=a_i \bmod q$. Consider the three properties of CDS as a Sigma protocol.

Completeness of CDS guarantees that if the seller knows such $w=(i,\rho)$ and uses $P$ in the interaction, then a buyer who uses $V$ accepts and trusts that the seller indeed knows such $w$.

Knowledge extraction guarantees that after the seller sends the first message $\alpha$---regardless of whether she uses $P$ or not---if she can then respond to a uniformly random challenge with nonnegligible probability of acceptance by $V$, then she in fact knows such $w=(i,\rho)$.

Finally, perfect witness indistinguishability guarantees that if the seller truly knows such $w=(i,\rho)$ and uses $P$ in the interaction, then the buyer (regardless of how cleverly she chooses the challenge message) never learns $i$. To see this, note that because $g$ is a generator of $G$, for every $i'$ there exists $\rho'$ such that $g^{\rho'}=a_{i'}\bmod q$ and hence $(x,(i',\rho'))\in R$. Therefore, perfect witness indistinguishability dictates that the buyer cannot distinguish between the seller using $P$ with input $(x,w)$ and the seller using $P$ with input $(x,(i',\rho'))$. Hence, the buyer cannot learn anything about $i$. (The buyer cannot learn $\rho$ either, because given $\rho$, it is easy to find $i$ by calculating $g^\rho\bmod q$.)

Overall, the three above properties of the Sigma protocols that we survey in \cref{schnorr-cds} and define in \cref{cds-new} guarantee all that we need for our illustrative examples in \cref{examples}. For completeness, in \cref{app:examples} we reformulate a selection of these illustrative examples as commit-and-run protocol catalogs and show that all of our desiderata from \cref{framework} can be (carefully) proven from the three above properties of the Sigma protocols that we use.

\subsection{The Schnorr and CDS Protocols: A Brief Survey}\label[appendix]{schnorr-cds}

We survey the CDS \citep*{CDS94} Sigma protocol for proving knowledge of the discrete logarithm of one out of a set of group elements.

\paragraph{The Schnorr protocol}
As a preliminary step to presenting CDS, we first survey the Schnorr Sigma protocol for proving knowledge of the discrete logarithm of a commonly known group element \citep{S91}.
Let $G$ be an (algebraic) group of large prime order (for the purposes of this paper, let $G$ be as defined in \cref{examples}) and let $g,a$ be elements of $G\setminus\{1\}$. Assume the prover wants to convince the verifier that she knows $\rho$ such that $g^\rho=a$---where here and below all arithmetic operations involving elements of $G$ are within the group (for $G$ as defined in \cref{examples}, this means that all these arithmetic operations are modulo $q$)---while making sure that the verifier learns nothing else about~$\rho$ in the process. 
The protocol (i.e., the interaction between the prover $P_{\text{Schnorr}}$ and verifier $V_{\text{Schnorr}}$ of the protocol) proceeds as follows (where $P_{\text{Schnorr}}$ gets input $(a,\rho)$ such that $g^\rho=a$, and $V_{\text{Schnorr}}$ gets input $a$):
\begin{enumerate}
    \item 
    $P_{\text{Schnorr}}$ draws $r\sim U\{1,\ldots,p\}$, and sends $\alpha=g^r$ (and privately remembers $r$).
    \item
    $V_{\text{Schnorr}}$ draws $\beta\sim U\{1,\ldots,p\}$ and sends it.
    \item
    $P_{\text{Schnorr}}$ sends $\gamma=r+\beta\rho \bmod p$.
    \item
    $V_{\text{Schnorr}}$ accepts if $g^\gamma=\alpha a^{\beta}$.
\end{enumerate}
It is easy to verify completeness (i.e., if $P_{\text{Schnorr}}$ and $V_{\text{Schnorr}}$ interact as above, then $V_{\text{Schnorr}}$ accepts). 

For knowledge extraction,
if $\beta,\gamma$ and $\beta',\gamma'$ are two challenge-response pairs that $V_{\text{Schnorr}}$ accepts with respect to the same first message $\alpha$,
then $g^{\gamma}/a^{\beta}=\alpha=g^{\gamma'}/a^{\beta'}$ and hence $a=g^{\frac{\gamma-\gamma'}{\beta-\beta'}}$, where the calculation in the exponent is modulo $p$.

While perfect witness indistinguishability is moot for proofs of knowledge of discrete logarithms (since the witness is unique), the protocol is \emph{honest-verifier zero knowledge}. That is, the distribution of the transcript $(a,\alpha,\beta,\gamma)$ of the interaction of $V_{\text{Schnorr}}$ and $P_{\text{Schnorr}}$ can be feasibly sampled from given only $a$. Indeed, to draw appropriately distributed $(\alpha,\beta,\gamma)$, first independently draw $\beta,\gamma\sim U\{1,\ldots,p\}$, then set $\alpha=\frac{g^\gamma}{a^\beta}$. We use this property when defining CDS below using the Schnorr protocol.

\paragraph{The CDS protocol}
CDS extends the Schnorr protocol to the case where the prover wants to convince the verifier that she knows the discrete log of \emph{at least one out of several} commonly known elements 
 $a_1,\ldots,a_k\in G\setminus\{1\}$, while making sure that the verifier learns nothing else in the process. In particular, the verifier does not learn to which of $a_1,\ldots,a_k$ the discrete log corresponds. 
 
In a nutshell, the idea is to have the prover $P_{\text{CDS}}$ and the verifier $V_{\text{CDS}}$ of the protocol run $k$ copies of the interaction between $P_{\text{Schnorr}}$ and $V_{\text{Schnorr}}$ from the Schnorr protocol, where $V_{\text{CDS}}$ allows $P_{\text{CDS}}$ to choose the $k$ challenges, subject to the constraint that all the challenges sum up to a single ``master challenge'' that $V_{\text{CDS}}$ chooses. This allows $P_{\text{CDS}}$ to simulate (i.e., sample a transcript for) the Schnorr protocol for all but one of the $k$ copies (the one corresponding to the discrete logarithm that is supposedly known). 
 
The protocol proceeds as follows (where $P_{\text{CDS}}$ gets input $((a_1,\ldots,a_k),(i,\rho))$ such that $g^\rho=a_i$, and $V_{\text{CDS}}$ gets input $(a_1,\ldots,a_k)$):
\begin{enumerate}
    \item $P_{\text{CDS}}$ sends $\alpha_1,\ldots,\alpha_k$, where:
    \begin{enumerate}
    \item $\alpha_i=g^r$, where $r\sim U\{1,\ldots,p\}$
    \item For $i'\neq i$, $\alpha_{i'}=\frac{g^{\gamma_{i'}}}{a_{i'}^{\beta_{i'}}}$, where $\beta_{i'},\gamma_{i'}\sim U\{1,\ldots,p\}$ independently
    \end{enumerate}
    \item
    $V_{\text{CDS}}$ draws $\beta\sim U\{1,\ldots,p\}$ and sends it.
    \item
    $P_{\text{CDS}}$ sets $\beta_i=\beta-\sum_{i'\neq i}\beta_{i'} \bmod p$, $\gamma_i=r+\beta_i\rho \bmod p$, and sends $\gamma_1,\ldots,\gamma_k,\beta_1,\ldots,\beta_k$. 
\item
    $V_{\text{CDS}}$ accepts if $\beta_1+\cdots+\beta_k=\beta \bmod p$, and in addition $g^{\gamma_i}=\alpha_ia_i^{\beta_i}$ for every $i=1,\ldots,k$.
\end{enumerate}

Perfect witness indistinguishability is immediate. Completeness and knowledge extraction are argued similarly to the Schnorr protocol. (For knowledge extraction, if $\beta,(\vec\gamma,\vec\beta)$ and $\beta',(\vec\gamma',\vec\beta')$ are two challenge-response pairs with $\beta\neq \beta'$ that $V_{\text{CDS}}$ accepts with respect to the same first message $\vec\alpha$, then there exists $i$ such that $\beta_i\neq \beta_i'$. For this $i$ we have $g^{\gamma_i}/a_i^{\beta_i}=\alpha_i=g^{\gamma_i'}/a_i^{\beta_i'}$ and hence $a_i=g^{\frac{\gamma_i-\gamma_i'}{\beta_i-\beta_i'}}$, where the calculation in the exponent is modulo $p$.)

\subsection{A New Extension}\label[appendix]{cds-new}
We describe a new extension of CDS\@. The extension considers commonly known values $a_{1,1},\ldots, a_{k,m},g_{1,1},\ldots,g_{k,m}$; in this Sigma protocol the prover proves that there exists $i\in\{1,\ldots,k\}$ such that for every $j$, she knows $\log_{g_{i,j}}a_{i,j}$.

Before we present this extension of CDS, we first present a Sigma protocol that is a straightforward extension of the Schnorr protocol, where a prover proves that for every $j\in\{1,\ldots,m\}$ she knows $\log_{g_j} a_j$, where $a_1,\ldots,a_m$ and $g_1,\ldots,g_m$ are commonly known.
While this could also be done by running $m$ copies of the Schnorr protocol, we avoid doing so because that would not be a useful building block for our extension of CDS.

\paragraph{Extension of Schnorr}
Let $a_1,\ldots,a_m,g_1,\ldots,g_m \in G\setminus\{1\}$. The prover wants to convince the verifier in zero knowledge that she knows $\rho_1,\ldots,\rho_m$ such that $g_j^{\rho_j}=a_j$ for every $j$.
The protocol (i.e., the interaction between the prover $P$ and verifier $V$ of the protocol) proceeds as follows (where $P$ gets input $((a_1,\ldots,a_m,g_1,\ldots,g_m),(\rho_1,\ldots,\rho_m))$ such that $g_j^{\rho_j}=a_j$ for every $j$, and $V$ gets input $(a_1,\ldots,a_m,g_1,\ldots,g_m)$):
\begin{enumerate}
    \item 
    $P$ independently draws $r_1,\ldots,r_m\sim U\{1,\ldots,p\}$, and sends $\vec\alpha=\alpha_1,\ldots,\alpha_m$, where $\alpha_j=g_j^{r_j}$ for every $j$.
    \item
    $V$ draws $\beta\sim U\{1,\ldots,p\}$ and sends it.
    \item
    $P$ sends $\gamma_1,\ldots,\gamma_m$, where $\gamma_j=r_j+\beta \rho_j \bmod p$ for every $j$.
    \item
    $V$ accepts if $g_j^{\gamma_j}=\alpha_j a_j^{\beta}$ for every $j=1,\ldots,m$.
\end{enumerate}
Completeness, knowledge extraction, and perfect witness indistinguishability (and honest-verifier zero knowledge) follow in the same way as for the Schnorr protocol. We emphasize that the proof uses only a single verifier challenge (the same challenge for all $j$); this property will be useful for the extension of CDS described next.

\paragraph{Extension of CDS}
Let $a_{i,j},g_{i,j}\in G\setminus\{1\}$ for every $i=1,\ldots,k$ and $j=1,\ldots,m$. The prover wants to convince the verifier in zero knowledge that she knows $(i,\rho_1,\ldots,\rho_m)$ such that $g_{i,j}^{\rho_j}=a_{i,j}$ for every $j$. The protocol (i.e., the interaction between the prover $P^*$ and verifier $V^*$ of the protocol) proceeds as follows (where $P^*$ gets input $((a_{1,1},\ldots,a_{k,m},g_{1,1},\ldots,g_{k,m}),\linebreak(i,\rho_1,\ldots,\rho_m))$ such that $g_{i,j}^{\rho_j}=a_{i,j}$ for every $j$, and $V^*$ gets input $(a_{1,1},\ldots,a_{k,m},g_{1,1},\ldots,g_{k,m})$):
\begin{enumerate}
    \item $P^*$ draws $\beta_{i'}\sim U\{1,\ldots,p\}$ independently for each $i'\ne i$ and sends $\alpha_{1,1},\ldots,\alpha_{k,m}$, where, for every $j=1,\ldots,m$:
    \begin{enumerate}
    \item $\alpha_{i,j}=g_{i,j}^{r_j}$, where $r_j\sim U\{1,\ldots,p\}$ independently
    \item For $i'\neq i$, $\alpha_{i',j}=\frac{g_{i',j}^{\gamma_{i',j}}}{a_{i',j}^{\beta_{i'}}}$, where $\gamma_{i',j}\sim U\{1,\ldots,p\}$ independently
    \end{enumerate} 
    \item
    $V^*$ draws $\beta\sim U\{1,\ldots,p\}$ and sends it.
    \item
    $P^*$ sets $\beta_{i}=\beta-\sum_{i'\neq i}\beta_{i'} \bmod p$, and $\gamma_{i,j}=r_j+\beta_{i} \rho_j \bmod p$ for every $j$, and sends $\gamma_{1,1},\ldots,\gamma_{k,m},\beta_{1},\ldots,\beta_{k}$.
    \item
    $V^*$ accepts if $\beta_{1}+\cdots+\beta_{k}=\beta \bmod p$, and in addition $g_{i,j}^{\gamma_{i,j}}=\alpha_{i,j} a_{i,j}^{\beta_{i}}$ for every $i=1,\ldots,k$ and $j=1,\ldots,m$.
\end{enumerate}

Completeness and perfect witness indistinguishability are argued similarly to CDS (replacing the Schnorr protocol with its above extension as a building block). For knowledge extraction, if $\beta,(\vec\gamma,\vec\beta)$ and $\beta',(\vec\gamma',\vec\beta')$ are two challenge-response pairs with $\beta\neq \beta'$ that $V^*$ accepts with respect to the same first message $\vec\alpha$, then there exists $i$ such that $\beta_i\neq \beta_i'$. 
For this $i$ and for every $j=1,\ldots,m$ we have $g_{i,j}^{\gamma_{i,j}}/a_{i,j}^{\beta_i}=\alpha_{i,j}=g_{i,j}^{\gamma_{i,j}'}/a_{i,j}^{\beta_i'}$ and hence $a_{i,j}=g_{i,j}^{\frac{\gamma_{i,j}-\gamma_{i,j}'}{\beta_i-\beta_i'}}$, where the calculation in the exponent is modulo $p$.

\section[Contract Design Without Disclosure:\texorpdfstring{\\}{ }Private Actions and Moral Hazard]{Contract Design Without Disclosure:\texorpdfstring{\\}{ }Private Actions and Moral Hazard}\label[appendix]{contracts}

In this paper, we have considered mechanisms with private types and public actions (essentially, reports). It is straightforward to apply our framework also to mechanisms with private actions (and hence moral hazard), such as contracts.\footnote{In a nutshell: A principal wants an agent to invest effort to stochastically increase the principal's returns. The level of effort that the agent exerts is private (unobserved and noncontractible), so the principal offers the agent a \emph{contract}: a monetary transfer between the principal and the agent that is a function of the principal's future (stochastic, publicly realized) returns. As we discuss in the introduction, there are natural settings in which it can be of interest for the principal to commit to this function while hiding it and proving that it incentivizes exertion of effort, while having the agent eventually learn only her bottom-line realized wage (or loss, in cases without limited liability).} In a traditional protocol for contracts, the properties to be proven by the mechanism designer (the principal) in the commitment step are individual rationality (and in some settings limited liability) and the optimal level of effort for the player (the agent) to privately exert; furthermore, the analogue of the direct revelation step from auction protocols is an actions/returns step in which the agent privately exerts effort, following which the stochastic returns are publicly realized. In the run step, the principal then calculates the wage based on the returns.

A commit-and-run protocol for such contracts is then similar to the one in \cref{protocols}, with the same changes as just described to the incentive property proven and to the direct revelation-turned-actions/returns step, and---denoting the returns, by abuse of notation, by $t$---with no changes to the remainder of the protocol. At a high level, the commitment playbook then involves the principal sending to the agent a cryptographic commitment to a hidden contract along with a zero-knowledge proof of the optimal effort for the agent to exert, and the run playbook again involves a zero-knowledge proof that the declared outcome is consistent with the hidden contract/mechanism that was committed to in the commitment step.

\section{The Illustrative Examples from Section~\ref{examples} as\texorpdfstring{\\}{ }Commit-and-Run Protocols}\label[appendix]{app:examples}
In this \lcnamecref{app:examples}, for completeness, we reformulate Illustrative Examples 1 and 2 from \cref{examples} as commit-and-run protocol catalogs (up to the added interaction during proofs) and prove that they satisfy our desiderata from \cref{framework}. The analysis for Illustrative Examples 3 and 4 is similar. While we do not do so for simplicity of our constructions, we note that under certain computational infeasibility assumptions, all verifier interaction can be removed from CDS as well as from our new extension of it (i.e., the verifier's message can be eliminated from the interaction without hurting any of the properties\cryptofootnote{By this we mean that the Sigma protocols that we use can be made into non-interactive zero-knowledge proofs of knowledge (in the random oracle model), using the Fiat--Shamir transform \citep{FS86}.}), resulting in commit-and-run protocol catalogs without any added interaction during proofs.

We formalize Illustrative Example 2 from \cref{examples}; Illustrative Example 1 is a special case.
Let $H\in\mathbb{N}$ be a power of $2$. Let $T=T_1=\{0,\ldots,H\!-\!1\}^2$, each $t\in T$ indicating the buyer's value for each item. We choose a mechanism-description language in which each mechanism description is exactly $2\log_2H$ bits long, interpreted as pairs of prices in $\{0,\ldots,H\!-\!1\}$. It suffices to only consider specifications with $L_a=2\log_2H$ (bigger $L_a$ would be equivalent, and smaller $L_a$ would have no valid mechanism descriptions), with $L_p=0$ (since all mechanisms describable in our language are IR and IC), and with $\runrandlen=0$ (since all such mechanisms are deterministic). Running any such mechanism on a given type report can always be done in time linear in $L_a$, say, in time $100\cdot L_a$, so it suffices to only consider specifications with $R=100\cdot L_a$. So, we must construct a protocol for every $B\in\mathbb{N}$ and $0<\varepsilon<1$.

Let $G$ be the group from \cref{examples}, for $p$ and $q$ to be chosen later as a function of $B$ and~$\varepsilon$. For ease of presentation, we assume that $\commitrand$, rather than being a uniformly random sequence of $\commitrandlen$ bits, is a pair $(g,h)$ of uniformly random elements in $G\setminus\{1\}$. We take $\commitmsglen=2\log_2H\lceil\log_2 q\rceil$ and interpret a commitment message as $2\log_2H$ commitments to bits ($\log_2H$ commitments for each price, each of them an element of $G\setminus\{1\}$). $\commitverify$ simply verifies that each of these $2\log_2H$ bit commitments is an element of $G\setminus\{1\}$.\footnote{This can be feasibly done (i.e., done in time polylogarithmic in~$|G|$) for the groups based on Sophie Germain primes and safe primes as mentioned in \cref{examples}, since verifying that an element of the larger group is a square (i.e., is indeed an element of the smaller group) can be feasibly done since the order of the big group, $q-1$, is known.}

As already noted, the proof of properly running the mechanism is interactive in this protocol, so instead of defining $\runmsglen$ and $\runverify$, we define an interactive verification process. We will be mindful that the total computational power required by the buyer during the run phase (i.e., after the types are reported) satisfies the computational constraints we place on $\runverify$ in computationally unobtrusive protocol catalogs.

The playbook is defined as in \cref{examples}: $\protcommitstrat$ commits to two prices as described there, and $\protrunstrat$ (interactively) proves correctness as described there, and uses (the interactive) CDS (see \cref{cds}) if a zero-knowledge proof of knowledge of a discrete logarithm in a given base of one or more elements from a set is called for.

The fact that our protocol is \emph{implementing} follows from the completeness of CDS (see \cref{cds}). To see that it is \emph{hiding}, observe that the commitment step is hiding since each bit commitment, regardless of the bit that it is hiding, has precisely the same distribution: uniform in $G\setminus\{1\}$, and hence conveys no information whatsoever regarding the hidden bit. The run step is hiding due to the perfect witness indistinguishability of CDS (see \cref{cds}). Note that the same protocol is hiding for all $B$ and $\varepsilon$: its hiding guarantee is absolute---even a computationally unbounded adversary (a stronger guarantee than for a computationally bounded one) cannot guess the origin of a transcript with probability greater than half (a stronger guarantee than not guessing with probability nonnegligibly greater than half).

It remains to show that a group size $p$ can be chosen based on $B$ and $\varepsilon$ so that the protocol can be shown to be both \emph{committing} and \emph{computationally unobtrusive}.
Let $B'=5\cdot|T|\cdot B/\varepsilon$. There is broad confidence among cryptographers and computational complexity theorists that choosing the logarithm, $\log p$, of the size of the group $G$ to be a sufficiently large polynomial in $\log(B'/\varepsilon)$ suffices for guaranteeing that no algorithm with computing power at most $B'$ can calculate the discrete logarithm of $h$ base $g$ for uniformly random $g,h$ in $G\setminus\{1\}$ with probability greater than~$\nicefrac{\varepsilon}{6}$. This widely believed computational infeasibility assumption is used in practice in many real-life cryptographic systems, and breaking it would result in the ability to break into those systems. Note that with this choice of $p$, representing elements of $G$, drawing random elements of $G$, calculating multiplication and exponentiation in $G$, and arithmetic (of exponents) $\bmod p$ can all be done in time polylogarithmic (and, in particular, subpolynomial) in $B$ and $1/\varepsilon$, and therefore the protocol is \emph{computationally unobtrusive}. To prove that the protocol is \emph{committing} with this choice of $p$ (given the above widely believed computational infeasibility assumption), we show that an adversary that breaks the property just defined can be used to build an algorithm with computing power at most $B'$ that can calculate the discrete logarithm of $h$ base $g$ for uniformly random $g,h$ in~$G\setminus\{1\}$ with probability greater than~$\nicefrac{\varepsilon}{6}$, contradicting the computational infeasibility assumption.

Let $(\commitstrat,\runstrat)$ be a mechanism-designer strategy computable in time at most $B$. We implicitly define an appropriate function $\alpha:\bigl(G\setminus\{1\}\bigr)^2\rightarrow\Delta\bigl(\{0,\ldots,H\!-\!1\}^2\bigr)$ by directly defining the coupling $\Xi$ of $\commitstrat$ and $\alpha$. (We give an algorithmic definition because we will later be interested in the running time of $\alpha(\commitrand)$.)
\begin{enumerate}
    \item Let $(\commitmsg,s)\sim \commitstrat(\commitrand)$.
    \item If not $\commitverify(\commitrand,\commitmsg)$, then return $\bigl((\commitmsg,s),(s^1,s^2)\bigr)$ for arbitrary $(s^1,s^2)\in\{0,\ldots,H\!-\!1\}^2$.
    \item Initialize $s^1,s^2$ to be the maximum possible price, $H\!-\!1$.
    \item For each $t=(v_1,v_2)\in T$:
    \begin{enumerate}
        \item Let $(x,\runmsg)\sim\runstrat\bigl(\commitrand,(\commitmsg,s),(v_1,v_2)\bigr)$. (Recall that $\runrandlen=0$.)
        \item If the first (respectively, second) item is sold in $x$ and the verification of the revealed price passes (i.e., the commitment to the price is opened to the price claimed in the outcome), then update $s^1$ (respectively, $s^2$) to be the price at which it is sold.
    \end{enumerate}
    \item Return $\bigl((\commitmsg,s),(s^1,s^2)\bigr)$.
\end{enumerate}

We now prove that $\alpha$ (under the coupling $\Xi$) satisfies the guarantee in the definition of a committing protocol. Assume for contradiction that for some $(\commitstrat,\runstrat)$ computable in time at most $B$ there exists $t=(v_1,v_2)\in T$ such that with probability higher than $\varepsilon$ verification passes but the outcome is not that of $M_{\alpha(\commitrand)}$. We use $(\commitstrat,\runstrat)$ to construct an algorithm $\Pi$ with running time at most $B'$ (as defined above) that takes two random elements $g,h\in G\setminus\{1\}$ and with probability greater than $\nicefrac{\varepsilon}{6}$ computes $\ell$ such that $g^\ell=h$ (in $G$). This contradicts the assumption of the infeasibility of calculating a discrete logarithm in $G$.

The algorithm $\Pi$ runs as follows on input $(g,h)$. Set $\commitrand=(g,h)$ and run $\Xi(\commitrand)$ to obtain a commitment message $\commitmsg=(C^1_1,\ldots,C^1_{\log_2H},C^2_1,\ldots,C^2_{\log_2H})$ and state $s$ (jointly distributed as if output by $\commitstrat$) as well as a mechanism description $A=(s^1,s^2)$ (distributed as if output by $\alpha$). 
Next, run $\runstrat$ with commitment bits $\commitrand$, commitment message $\commitmsg$ and state $s$, and type report $(v_1,v_2)$ to obtain the outcome $x$. If the outcome declared by $\runstrat$ is \emph{not} $M_{(s^1,s^2)}(v_1,v_2)$, then proceed as follows. We consider two cases. The first case is where the outcome $x$ sells one of the items---without loss of generality, item $1$---for a price different than the one prescribed by $(s^1,s^2)$. In this case, verification is by opening the commitment to the price of item $1$, and therefore verification does not involve any challenges. Finish running $\runstrat$ to run the verification. If verification passes, then it means that there exists $t'=(v_1',v_2')\in T$ (the type that defines the price $s^1$ during the run of~$\Xi$) such that when run with $t$ and $t'$, the run strategy $\runstrat$ opened the commitment to the price of item $1$ into two different prices. In turn, this means that $C^1_j$ for some $j$ was opened by $\runstrat$ into both $0$ and $1$, i.e., it outputs values $\rho,\rho'$ such that $g^\rho=h^{\rho'}=C^1_j$. In this case (if verification passes), $\Pi$ computes $\ell=\rho\cdot{\rho'}^{-1}\bmod p$ (this can be feasibly done) and returns it.

The second case is where the outcome $x$ does not sell an item---without loss of generality, item $1$---even though based on the prices $(s^1,s^2)$ it should have been sold. In this case, verification is by CDS and involves a random challenge. This means that there exists $t'=(v_1',v_2')\in T$ (again, the type that defines the price $s^1$ during the run of $\Xi$) such that when running with $(v_1',v_2')$, $\runstrat$ opens the commitment to the price of item $1$ into the price $s^1$ while when running with $(v_1,v_2)$, $\runstrat$ claims that the price of item $1$ is greater than~$s'$ for some value $s'\ge s^1$, and attempts to prove this during verification. More specifically, there exist $i_1,\ldots,i_k$ such that when running with $(v_1',v_2')$ every $C^1_{i_j}$ is opened to $0$, i.e., a discrete log base $g$ of $C^1_{i_j}$ is revealed, but when running with $(v_1,v_2)$, $\runstrat$ claims that it knows how to represent one of these commitments as a commitment to $1$, i.e., knows a discrete log base $h$ of one of the $C^1_{i_j}$s, and attempts to prove this using CDS during verification.
In this case, $\Pi$ repeats the following $\lceil4/\varepsilon\rceil$ times: sample $\beta\sim U\{1,\ldots,p\}$ and finish running $\runstrat$ from immediately after the first message of CDS was sent, to continue the verification with challenge $\beta$. If verification passes for at least two sampled values $\beta^1\ne\beta^2$, then run the knowledge extractor of CDS (see \cref{cds}) on the verification transcripts for these two values to obtain a discrete logarithm base $h$ of one of the $C^1_{i_j}$s. We then once again have discrete logarithms in both base $g$ and base $h$ of a commitment, so we continue in the same way as in the first case.

Note that the running time of $\Pi$ is indeed at most $B'$ (it runs $\commitstrat$ and/or $\runstrat$ at most $|T|\!+\!\lceil4/\varepsilon\rceil$ times, and in addition only performs operations that are polylogarithmic in $|G|$, which are fast enough to not take the overall running time over $B'$). It remains to show that $\Pi$ succeeds with probability greater than $\nicefrac{\varepsilon}{6}$ over uniformly chosen $g,h$. Note that a probability greater than $\varepsilon$ that verifications pass but the outcome is not $M_{\alpha(\commitrand)}(t)$ implies that with probability at least $\nicefrac{\varepsilon}{2}$ over the randomness of $\commitrand$, $\commitstrat$, and $\runstrat$ up to and including sending the first CDS message, there is probability at least $\nicefrac{\varepsilon}{2}$ over drawing the CDS challenge and the remaining randomness of $\runstrat$ that verifications pass but the outcome is not $M_{\alpha(\commitrand)}(t)$. (Otherwise, the probability that verifications pass but the outcome is not $M_{\alpha(\commitrand)}(t)$ is less than $\nicefrac{\varepsilon}{2}\cdot1+(1-\nicefrac{\varepsilon}{2})\cdot\nicefrac{\varepsilon}{2}<\varepsilon$.) Whenever the former $\nicefrac{\varepsilon}{2}$-probability event holds, we have that $\Pi$ succeeds with probability at least $\bigl(1-(1-\nicefrac{\varepsilon}{2})^{2/\varepsilon}\bigr)^2>(1-\nicefrac{1}{e})^2>\nicefrac{1}{3}$ (the first expression is the probability of at least one run-verification passing in the first $\nicefrac{2}{\varepsilon}$ samples and at least one run-verification passing in the last $\nicefrac{2}{\varepsilon}$ samples; the inequalities are loose enough to absorb the small probability that $\beta^1=\beta^2$). Therefore, $\Pi$ succeeds with probability greater than $\nicefrac{\varepsilon}{6}$ over uniformly random $(g,h)\in (G\setminus\{1\})^2$, as required.\qed

\section[Mapping the Commit-then-Prove Scheme to the\texorpdfstring{\\}{ }Cryptographic Literature]{Mapping the Commit-then-Prove Scheme to the Cryptographic Literature}\label[appendix]{cp-proof-sec}

In this \lcnamecref{cp-proof-sec}, we provide an overview of the main components from the cryptographic literature that jointly substantiate the construction and properties of the commit-then-prove scheme guaranteed by \cref{ctp} from \cref{cp-sec}. To keep the number of pages manageable, this \lcnamecref{cp-proof-sec} (and only this \lcnamecref{cp-proof-sec}) is written for an audience with background in cryptography.

Recall that a commit-then-prove scheme $\pi$ for some publicly known relation~$\ctprelation$ allows a committer to commit to a secret string $w$, and later provide, over time, public statements $\phi_1,\phi_2,\ldots$, where each $\phi_i$ is accompanied by a proof that the committer knows some secret supporting information $P_i$ such that 
$\ctprelation(w,\phi_i,P_i)$ holds. This is done while preserving the secrecy of $w$ and of the $P_i$s, in the sense that interacting with the committer provides the verifiers with no additional information or computational advantage beyond what can be inferred from the statements $\phi_1,\phi_2,\ldots$ themselves. 
Furthermore recall that the commit stage, and each prove stage, consists of a single message generated by the committer.

More specifically, $\pi$ is a commit-then-prove scheme for relation $\ctprelation$ if $\pi$ UC-realizes $\fctp^\ctprelation$, with the following additional properties. (1) Perfect completeness: when the committer follows the protocol, the verifier is guaranteed to always accept. (2) The only common state of the committer and the verifier programs is a uniformly distributed reference string. (3) The simulated reference string is also distributed uniformly. (This provision is later used to guarantee our analogue of soundness---i.e., what we call \emph{committing}---in the commit-and-run protocols for each value of the reference string, except perhaps for a negligible fraction.)\footnote{The UC framework allows the simulator to generate a reference string that is distributed differently than the real reference string, as long as this fact remains unnoticed by the adversary. In contrast, the $\fctp$-challenge of Definition~\ref{def:ctp} is a direct restatement of the requirement that $\fctp$ is UC-realized with the aid of a common uniform string---with the exception that Definition~\ref{def:ctp} restricts attention to simulators that generate a reference string that is uniformly distributed.
}

We also note that $\fctp$ is similar to the Commit-and-Prove functionality from \cite{clos02}, with the following additional exceptions:
\begin{enumerate}
\item
In \cite{clos02}, the committer can commit, over time, to multiple secret values $w_1,w_2,\ldots$, and then prove that each public statement $\phi_i$ satisfies $\ctprelation(\phi_i,w_1,w_2,\ldots)$ with respect to multiple committed values. This also obviates the need to have explicit supporting information $P_1,P_2,\ldots$\,.
\item
In \cite{clos02} each commitment and proof stage is allowed to be interactive, while $\fctp$ requires non-interactivity.
\end{enumerate}

Constructions of the commit-then-prove scheme in this work will thus follow a similar structure to those in \cite{clos02}, while replacing components so as to satisfy the additional requirements. Recall that in \cite{clos02} the commitment stage can be realized via any protocol that UC-realizes $\fcom$, the ideal commitment functionality, whereas each proof stage can be realized via any protocol that UC-realizes $\fzk$, the zero-knowledge proof functionality. In particular, the UC commitment protocols and zero-knowledge proofs from \cite{CF01} work.

In the case of the commit-then-prove scheme, we use a similar strategy, except that here we replace $\fcom$ and $\fzk$ by their non-interactive variants $\fnicom$ and $\fnizk$ \citep[see, e.g.,][]{CSW22}, which provide the additional guarantee that the commit and prove stages are each realized via a single message.\footnote{The proof that the combination of a protocol that realizes $\fnicom$ and a protocol that realizes $\fnizk$ in fact realizes $\fctp$ is the same as in \cite{clos02}. In fact, the non-interactivity guarantees of $\fnicom$ and $\fnizk$ allow simplifying the proof considerably. We omit further details.} 

It thus remains to show how to appropriately realize $\fnicom$ and $\fnizk$. We first show protocols that realize $\fnicom$ and $\fnizk$ with a common uniformly random string. As demonstrated in \cite{CSW22}, the \cite{CF01} commitment protocol UC-realizes $\fnicom$, provided a public-key encryption scheme that is secure against chosen-ciphertext attacks and a claw-free pair of permutations. The reference string is then interpreted as a pair $(e,c)$ where $e$ is a public key for the encryption scheme and $c$ is a key of the claw-free pair. Both primitives have instantiations based on standard, widely believed computational infeasibility assumptions and where the relevant keys are uniformly drawn. The protocol of \cite{GOS12} UC-realizes $\fnizk$ with a uniformly chosen reference string, under similar computational infeasibility assumptions. We further note that: (1) the simulators in both \cite{CF01} and \cite{GOS12} indeed use a reference string that is distributed identically to the one used in the respective protocols, and (2) both protocols enjoy perfect completeness: as long as the parties follow the protocol, the verifier always accepts. (We stress that these instantiations are only intended as proof-of-concept constructions; multiple other instantiations exist, offering a rich ground for trading off various complexity parameters and computational infeasibility assumptions.)

In terms of concrete infeasibility assumptions, we recall that non-interactive UC commitments and zero knowledge can be realized from any number of computational infeasibility assumptions that are standard and widely believed in cryptography, for instance the RSA assumption, the LWE (learning with errors) assumption, the DDH (decisional Diffie--Hellman) assumption together with the LPN (learning parity with noise) assumption, or the bilinear DDH assumption \citep[see, e.g.,][]{CF01,clos02,GOS12,BKM20,CSW22}.

\small
\singlespacing
\renewcommand\refname{References for Supplementary Material}
\phantomsection
\addcontentsline{toc}{section}{References for Supplementary Material}
\putbib
\end{bibunit}

\end{document}